\documentclass[11pt,conference,onecolumn]{IEEEtran}
\usepackage{enumerate}
\usepackage{comment}
\usepackage{amsfonts}
\usepackage{amsmath}
\usepackage{amssymb}
\usepackage{amsthm}
\usepackage{lscape}
\usepackage{epsf}
\usepackage{graphicx}
\usepackage{verbatim}
\usepackage{comment} 
\usepackage{color} 
\pagestyle{plain}
\newtheorem{theorem}{\indent Theorem}[section]
\newtheorem{lemma}[theorem]{\indent Lemma}

\newtheorem{observation}[theorem]{\indent Observation}
\newtheorem{proposition}[theorem]{\indent Proposition}
\newtheorem{claim}[theorem]{\indent Claim}
\newtheorem{construction}[theorem]{\indent Construction}

\newtheorem{EXAMPLE}{\indent Example}[section]
\newtheorem{definition}{\indent Definition}[section]
\newtheorem{problem}{\indent Problem}[section]

\newenvironment{example}{\begin{EXAMPLE}\rm}{\rm\end{EXAMPLE}}

\newcommand{\code}{{\mathcal{C}}}
\newcommand{\graph}{{\mathcal{G}}}

\newcommand{\cB}{{\mathcal{B}}}
\newcommand{\cA}{{\mathcal{A}}}
\newcommand{\cD}{{\mathcal{D}}}

\newcommand{\cS}{{\mathcal{S}}}
\newcommand{\cT}{{\mathcal{T}}}

\newcommand{\Csf}{{\mathsf{C}}}
\newcommand{\Dsf}{{\mathsf{D}}}

\renewcommand{\d}{{\mathsf{d}}}

\newcommand{\ff}{{\mathbb{F}}}

\newcommand{\nn}{{\mathbb N}}

\newcommand{\bldA}{{\mbox{\boldmath $A$}}}

\newcommand{\bldb}{{\mbox{\boldmath $b$}}}

\newcommand{\blde}{{\mbox{\boldmath $e$}}}

\newcommand{\bldg}{{\mbox{\boldmath $g$}}}
\newcommand{\bldh}{{\mbox{\boldmath $h$}}}
\newcommand{\bldI}{{\mbox{\boldmath $I$}}}
\newcommand{\bldG}{{\mbox{\boldmath $G$}}}

\newcommand{\bldx}{{\mbox{\boldmath $x$}}}

\newcommand{\bldy}{{\mbox{\boldmath $y$}}}
\newcommand{\bldz}{{\mbox{\boldmath $z$}}}

\usepackage{xcolor}
\newcommand{\vit}[1]{{{\color{black} {#1}}}}

\newlength{\Algwidth}

\title{Batch Codes for Asynchronous Recovery of Data}
\author{{\bf Ago-Erik Riet, Vitaly Skachek, and Eldho K. Thomas} \\ 
Institutes of Computer Science; Mathematics and Statistics \\
Faculty of Science and Technology,
University of Tartu, Narva mnt. 18, Tartu 51009, Estonia\\
E-mail: \{ago-erik.riet, vitaly.skachek, eldho.thomas\} @ ut.ee}
\begin{document}

\maketitle

\begin{abstract}
We propose a new model of asynchronous batch codes that allow for parallel recovery of information symbols from a coded database in an asynchronous manner, i.e. when 
queries arrive at random times and they take varying time to process. We show that the graph-based batch codes studied by 
\vit{Rawat \emph{et al.}} are asynchronous. Further, we demonstrate that hypergraphs
of Berge girth larger or equal to 4, respectively larger or equal to 3, yield graph-based asynchronous batch codes,
respectively private information retrieval (PIR) codes.
We prove the hypergraph-theoretic proposition that the
maximum number of hyperedges in a hypergraph of a fixed Berge girth equals
the quantity in a certain generalization of the hypergraph-theoretic
(6,3)-problem, first posed by Brown, Erd\H os and S\'os. We then
apply the constructions and bounds by Erd\H os, Frankl and R\"odl
about this generalization of the (6,3)-problem, known as the (3$\varrho$-3,$\varrho$)-problem, 
to obtain batch code constructions and bounds on the redundancy of the graph-based asynchronous batch and PIR codes. 
We derive bounds on the optimal redundancy of several families of asynchronous batch codes with the query size $t=2$. In particular, we
show that the optimal redundancy $\rho(k)$ of graph-based asynchronous batch codes of dimension $k$ for 
$t=2$ is $2\sqrt{k}$. Moreover, for graph-based asynchronous batch codes with $t \ge 3$, 
$\rho(k) = O\left({k}^{1/(2-\epsilon)}\right)$ for any small $\epsilon>0$.

\renewcommand{\thefootnote}{\fnsymbol{footnote}}
\footnotetext{The work of Ago-Erik Riet is partially supported by the Estonian Research Council grants PSG114 and IUT20-57.
The work of Vitaly Skachek is supported in part by the Estonian Research Council grant PRG49.
The work of Eldho K. Thomas is supported in part by the 
European Regional Development Fund through Mobilitas Pluss grant MOBJD246. 
This work is also supported in part by the European Regional Development Fund via CoE project EXCITE.
Authors thank \"Ulo Reimaa for useful comments and discussions.}
\footnotetext{The material in this paper is presented in part in~\cite{Ago}.}
\end{abstract}

\begin{IEEEkeywords}
Primitive linear multiset batch codes, private information retrieval codes, extremal hypergraph theory, Tur\'an theory, packing designs. 
\end{IEEEkeywords}

\vspace{-1ex}
\section{Introduction}
Batch codes were originally proposed by Ishai \emph{et al.}~\cite{Ishai} for load balancing in distributed storage systems with multiple servers. It was also suggested in~\cite{Ishai} to use batch codes for private information retrieval. Different constructions of these codes were presented therein. 
The usefulness of batch codes for load balancing in practical distributed storage systems were further articulated 
in~\cite{Soljanin1, Soljanin2, Mikelsaar}.  
In \cite{Wang2013, WKC2015}, the authors proposed to use the so-called  “switch codes”, which is a special case of batch codes, to facilitate the routing of data in network switches (see also~\cite{Chee}). A special class of batch codes called \emph{combinatorial batch codes} was studied, for example, in \cite{Bhatt, Gal, Stinson}.

Another special class of batch codes, which is the focus of our study, is \emph{linear} (or, \emph{computational}) batch codes~\cite{Lipmaa, dimakis-batch, Zhang-Skachek, Vardy, Eldho}, where the data is viewed as elements of a finite field written as a vector, and it is encoded using a linear transformation of that vector. 

Coding schemes for private information retrieval (PIR) were proposed by Fazeli, Vardy and Yaakobi~\cite{Fazeli}. The authors showed that a family of codes called PIR codes, which is a relaxed version of batch codes, can be employed
in classical linear PIR schemes in order to reduce the redundant information stored in a distributed server system. It was suggested therein to emulate standard private information retrieval protocols using a special layer (code) which maps between the requests
of the users and the data which is actually stored in the database.  

In both of the above approaches (batch and PIR codes), typically, a distributed data storage system is considered.
The coded words are written across a block of disks (servers), where each disk stores a single symbol (or a group of symbols). The reading of data is done by accessing a small number of disks.
Mathematically, this can be equivalently represented by the assumption that each information symbol depends on a small number of other symbols. However, the type of requested queries varies in different code models. 
Specifically, in PIR codes several copies of the same information symbols are requested, while in batch codes any possible combination of different information symbols could be requested. 
That is, PIR codes of dimension $k$ support queries of the form $(\underbrace{x_i,x_i, \cdots,x_i}_t),~i \in [k],~[k] \triangleq \{1, 2, \cdots, k \}$, whereas batch codes supports queries of the form $(x_{i_1},x_{i_2},\cdots, x_{i_t})$, for possibly different indices $i_1, \cdots, i_t \in [k]$.

Linear batch codes and PIR codes have many similarities to the \emph{locally-repairable codes}~\cite{dimakis-survey},
which are used for repair of lost data in distributed data storage systems. The main difference, however, is that in the locally-repairable codes, it is 
the coded symbols that are to be repaired, while in the batch codes and PIR codes it is the information symbols that are to be reconstructed~\vit{\cite{skachek}}.  
\medskip 

In this work, we observe that (classical) batch codes start serving the user requests only after a full batch of $t$ requests have been prepared. Since the arrival time of requests is random, some requests experience a longer waiting time, which is not desirable in delay-sensitive applications. Thus, we propose a new model called \emph{asynchronous batch codes}, which is a variation of batch codes with some additional properties.

In the proposed asynchronous model, the code starts serving the requests immediately after they arrive. If at least one of the initial $t$ request is served, the code is ready to take a next request without interrupting the servers which are currently busy. Unlike regular batch codes, in the new model, one does not have to wait for a full batch of requests to arrive, and hence it can be better suited for practical purposes. However, the redundancy of asynchronous batch codes is slightly higher than that of (classical) batch codes with the same parameters, and the analysis is more difficult. 

This paper presents the first detailed study of asynchronous batch codes. We focus mainly on asynchronous batch codes that are constructed from hypergraphs, analogous to the graph-based batch codes proposed in \cite{dimakis-batch}. By using results from hypergraph theory, we derive bounds on the redundancy of such asynchronous batch codes. We also discuss properties of asynchronous batch codes, which supports smaller batch sizes and propose some explicit constructions for any batch size $t$.

The paper is organized as follows: In Section \ref{sec:notations}, we explain the notations and basic definitions. The model of asynchronous batch code as well as the graph-based model are introduced in Section \ref{sec:abc}. Some examples and basic properties are also discussed there. The connection between hypergraphs and asynchronous batch codes as well as bound computations based on the results from hypergraph theory are discussed in Sections \ref{sec:hyper}--\ref{sec:abchypergraphs}. Finally, in Sections \ref{sec:properties} and \ref{sec:general}, we consider asynchronous batch codes with batch size $t=2$ and $t>2$, respectively. Some properties and explicit constructions are given in those sections.

\section{Notation and Preliminaries}
\label{sec:notations}

\subsection{Batch and PIR Codes}

We denote by $\nn$ the set of natural numbers, and by $\ff$ a finite field.  
We use the notation $\bldI_k$ for a $k \times k$ identity matrix over $\ff$. When the value of $k$ is clear from the context, we may also use a notation $\bldI$.  
In this work, we consider only (primitive multiset) batch codes as defined in~\cite{Vardy}.
\begin{definition}[\hspace{-0.1ex}\cite{Vardy}]
\label{batch}
An $(n,k, t)$ batch code $\code$ over a finite alphabet $\Sigma$ is defined by
an encoding mapping $\Csf \; : \; \Sigma^k \rightarrow \Sigma^n$, and a decoding mapping $\Dsf \; : \; \Sigma^n \times [k]^t\rightarrow \Sigma^t$, such that  
\begin{enumerate}
\item
For any $\bldx = (x_1, x_2, \cdots, x_k) \in \Sigma^k$ and 
$i_1, i_2, \cdots, i_t \in [k] \; $, 
\[
\Dsf\left(\bldy=\Csf(\bldx), i_1, i_2, \cdots, i_t\right) = (x_{i_1}, x_{i_2}, \cdots, x_{i_t}). \; \]
\item 
The symbols in the query $(x_{i_1}, x_{i_2}, \cdots, x_{i_t})$ can be reconstructed from 
$t$ respective pairwise disjoint recovery sets of symbols of $\bldy = (y_1, y_2, \cdots, y_n) \in \Sigma^n$ (the symbol $x_{i_\ell}$ is reconstructed from the $\ell$-th recovery set for each $\ell$, $1 \le \ell \le t$). 
\end{enumerate}
\label{def:batch}
\end{definition}

\begin{definition}
A recovery set of size one is called a \emph{singleton}. 
\end{definition}

Let $\ff = \ff_q$ be a finite field with $q$ elements, where $q$ is a prime power,
and $\code$ be a linear $[n,k]$ code over $\ff$. Denote the redundancy $\rho \triangleq n-k$. 
\begin{definition}
\label{def:linear_batch}
A \emph{linear batch code} is a batch code where the encoding of $\Csf$ is given as a multiplication 
by a $k \times n$ generator matrix $\bldG$ over $\ff$ of an information vector $\bldx \in \ff^k$, 
\begin{equation}
\bldy = \bldx \cdot \bldG \; ;~~ \bldy \in \ff^n .
\label{def:linear}
\end{equation}
A \emph{linear batch code} with the parameters $n$, $k$ and $t$ over $\ff_q$, where $t$ is a number of queried symbols, is denoted as an \emph{$[n,k,t]_q$-batch code}.
Sometimes we simply write $[n,k,t]$-batch code if the value of $q$ is clear from the context.  
\end{definition}

\begin{definition}
\label{def:query_size}
An \emph{$[n,k,t,r]_q$-batch code} (or, simply, \emph{$[n,k,t,r]$-batch code}) is an $[n,k,t]_q$-batch code ($[n,k,t]$-batch code, respectively) such that 
the size of every recovery set is less or equal to $r$.
\end{definition}

\begin{definition}
\label{def:systematic_batch}
A linear batch code is called \emph{systematic} if the matrix $\bldG$ has the form $[\bldI_k | \bldA]$, where $\bldA$ is a $k \times \rho$ matrix over $\ff$.
\end{definition}
For a systematic code $\code$, the encoding takes the form 
\[
  \bldy = \bldx \cdot \bldG = ( \; \bldx \; | \; \bldz \; ) \; , \quad \mbox{ where } \bldz = \bldx \cdot \bldA \; . 
\]
The subvector $\bldx$  of $\bldy$ is called the \emph{systematic part of $\bldy$}, and its symbols are called \emph{information symbols}. 
The subvector $\bldz$ is called a \emph{redundancy part  of $\bldy$}, and its symbols are called \emph{parity symbols}. 
Similarly, the submatrix $\bldI_k$ of $\bldG$ is called the systematic part of $\bldG$, and the submatrix $\bldA$ of $\bldG$ is called a redundancy part of $\bldG$.

\begin{definition}[\hspace{-0.1ex}\cite{Fazeli}]
\emph{Linear PIR codes} are defined similarly to linear primitive multiset batch codes, with a difference that the supported queries are of the form $(x_i, x_i, \cdots, x_i), \; i \in [k],$
(and not $(x_{i_1}, x_{i_2}, \cdots, x_{i_t}), \; i_1, i_2, \cdots, i_t \in [k]$ as in batch codes). 
\end{definition}

For constructions of PIR codes see, for example,~\cite{Lin, Vajha}. In what follows, we consider linear batch codes and PIR codes over $\ff = \ff_2$, yet most of the results hold for codes over larger fields too.  

\subsection{Graphs and Hypergraphs}
\label{gh}
Let $W^{(\varrho)}$, $\varrho \ge 2$, denote the set of all unordered $\varrho$-tuples of distinct elements of the set $W$.
An (undirected) \emph{graph} $G(V,E)$ consists of a finite set $V$, called the \emph{vertex set} and a finite set $E\subseteq V^{(2)}$ of pairs of vertices, called the \emph{edge set}. The graph $G(V,E)$ is \emph{bipartite} with \emph{bipartition} (or \emph{parts}) $(A,B)$ if $A\cup B=V$, $A\cap B = \varnothing$, and $|A\cap e|=1$ and $|B\cap e| = 1$ for every edge $e\in E$. We denote the bipartite graph with distinguished parts $A$ and $B$ as $G(A,B,E)$ where we call $A$ the \emph{left part} and $B$ the \emph{right part}. A \emph{$b$-cycle} in a graph $G(V,E)$ is a cyclic sequence of $b$ vertices and $b$ edges, alternatingly between vertices and edges, such that each edge consists precisely of the two vertices on each side of it in the sequence. A bipartite graph $G(A,B,E)$ is \emph{left-regular} if all \emph{left degrees} $\d (a)\triangleq |\{e\in E\, : \, a\in e\} |$, where $a\in A$, are equal.

More generally, a hypergraph $\mathcal{G}(V,E)$ consists of a finite set $V$ of vertices and a finite collection $E$ of subsets of $V$, called (hyper)edges. The hypergraph is \emph{$\varrho$-uniform}, or an \emph{$\varrho$-graph}, if each edge consists of the same number $\varrho$ of vertices, that is, $E \subseteq V^{(\varrho)}$. Thus, a graph can be viewed as a $2$-uniform hypergraph. 

A \emph{Berge cycle} in a hypergraph is a sequence $(e_1,v_1,e_2,v_2,\ldots,v_b,e_{b+1})$ where $e_1,e_2,\ldots,e_b$ are distinct hyperedges, $v_1,v_2,\ldots,v_b$ are distinct vertices, $v_{i-1}, v_i\in e_i$ for all $i$ (we have taken all indices modulo $b$ when defining the sequence) and $e_1=e_{b+1}$. We define a \emph{Berge path} in a hypergraph similarly to be an alternating sequence of vertices and hyperedges \vit{that starts and ends with a vertex}, where all hyperedges are distinct and all vertices are distinct, 
and the vertices on each side of a hyperedge in the sequence belong to the hyperedge. A hypergraph is \emph{Berge-connected} if there is a Berge path from any vertex to any other vertex. Equivalently, a hypergraph is \emph{Berge-disconnected} if its vertex set $V$ can be partitioned into two non-empty sets $V=V_1\cup V_2$ such that, for each hyperedge $e$, either $e\cap V_1=\varnothing$ or $e\cap V_2=\varnothing$; it is \emph{Berge-connected} if it is not disconnected. A hypergraph is said have \emph{Berge girth} equal $k$ if (a) it contains a Berge cycle with $k$ hyperedges; (b) it contains no Berge cycles with fewer than $k$ hyperedges. If a subset of vertices is allowed several (a finite number of) times as a hyperedge, we have a multihypergraph.
We note that a multi-$\varrho$-graph for $\varrho\geq 2$ with Berge girth at least 3 is necessarily a simple hypergraph, i.e.~no subset of vertices appears as an edge several times.

The following definition of the correspondence between bipartite graphs and (multi)hypergraphs is instrumental for the analysis in this paper. 
\begin{definition}
\label{def:correspondence}
With a (multi)hypergraph $\mathcal{G}(V,E)$ one can associate the bipartite \emph{incidence graph} $G(E,V,I)$ with left part $E$ and right part $V$ where $\{e,v\}$ is an edge, i.e.~$\{e, v\}\in I$ in $G$, if and only if $v\in e$ in $\mathcal{G}$. By going backwards, given a bipartite graph $G(E,V,I)$ we construct a (multi)hypergraph $\mathcal{G}(V,E)$ by identifying each $e\in E$ with the set $\{v\in V\,|\,\{e,v\}\in I\}$. 
\end{definition}
Therefore, multihypergraphs are in one-to-one correspondence with bipartite graphs. 
A multihypergraph is Berge-connected if and only if its incidence graph is connected; there is a one-to-one correspondence between Berge cycles with $k$ hyperedges in the multihypergraph and cycles of length $2k$ in the incidence graph. 

Extremal graph theory (or Tur\'an theory) is the study of maximal graphs with some properties. A typical question is to find the maximum size of a graph (number of hyperedges) on $n$ vertices, provided it contains no copy of a fixed subgraph, such as the triangle. 

A $\varrho$-graph $\mathcal{G}'(V',E')$ is a \emph{sub-$\varrho$-graph} of a $\varrho$-graph $\mathcal{G}(V,E)$ if $V'\subseteq V$ and $E'\subseteq \{e\in E \,|\, e\subseteq V'\}$. We say that the sub-$\varrho$-graph is \emph{induced} by the vertex set $V'$ if in addition $E'= \{e\in E \,|\, e\subseteq V'\}$. Similarly we say that a subset of hyperedges $E'$ \emph{induces} the vertex set $\bigcup_{e\in E'} e$. 

\subsection{Graph-based Batch and PIR Codes}

\begin{construction}
\label{constr:bipartite}
Let $\code$ be an $[n, k, t]_q$ batch (PIR) code defined by a systematic encoding matrix $\bldG = \left[ \; \bldI \; | \; \bldA \; \right]$.
Assume that all the used recovery sets of $\code$ contain a single parity symbol and any number of information symbols. Note that, while other recovery sets may exist, we only allow the system to use such recovery sets. 
The following bipartite graph representation of $\code$ was proposed 
in~\cite{dimakis-batch}.
 
Let $G(A,B,E)$ be a bipartite graph, where $A$ is the set of the information symbols,
$B$ is the set of the parity symbols, and  
\[
E = \Big\{ \{ u, v \} : u \in A, v\in B, \mbox{ information symbol } u \mbox{ participates in parity symbol } v \Big\} \; . 
\]
\end{construction}

\begin{definition}
An asynchronous $[n,k,t]$-batch code, which can be represented as in Construction~\ref{constr:bipartite}, is called a {\bf graph-based asynchronous} $[n,k,t]$-batch code. 
\end{definition}

\begin{theorem}(\hspace{-0.1ex}\cite[Theorem 1 and Lemma 2]{dimakis-batch})
\label{theorem1-dimakis}
Let $\code$ be an $[n, k]$ systematic code represented by the bipartite graph $G(A, B, E)$. 
Assume that there exists an \emph{induced subgraph} $H(A, B', E')$ of $G$, that is, $B' \subseteq B$ and $E' = \{e\in E \, :\, |e\cap B'|=1 \}$, such that: 
\begin{itemize}
\item[(i)] Each vertex in $A$ has degree at least $t$ in the bipartite graph $H$.
\item[(ii)] The graph $H$ has girth $\ge 8$ (respectively, $\ge 6$).
\end{itemize}
Then, $\code$ is an $[n, k, t]$ batch code (respectively, PIR code).
\end{theorem}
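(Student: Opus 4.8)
\emph{Proof proposal.} The plan is to produce, for an arbitrary admissible query, an explicit family of $t$ pairwise disjoint recovery sets assembled from only two elementary gadgets available in the graph model: the \emph{direct} set $\{a\}$, which recovers $x_a$ from its own systematic coordinate, and, for $a\in A$ and $v\in N_H(a)$, the \emph{single-parity} set $S(v,a)\triangleq\{v\}\cup\bigl(N_H(v)\setminus\{a\}\bigr)$, which recovers $x_a$ by solving the parity equation attached to $v$. Since $E'$ consists of all edges of $E$ meeting $B'$, we have $N_H(v)=N_G(v)$ for every $v\in B'$, so $S(v,a)$ genuinely determines $x_a$. The theorem then reduces to a purely combinatorial selection problem: for each distinct queried index pick one direct set and enough single-parity sets so that the resulting collection of $t$ sets is pairwise disjoint. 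I would treat the batch case (girth $\ge 8$) as the main case and obtain the PIR case (girth $\ge 6$) as a degenerate instance of it.

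Write the query as a multiset with distinct values $j_1,\dots,j_m$ of multiplicities $c_1,\dots,c_m$, so $\sum_{s}c_s=t$. First, for each $s$ delete from $N_H(j_s)$ every parity that is also adjacent in $H$ to some $j_{s'}$ with $s'\ne s$; since any two vertices of $A$ have at most one common neighbour in $H$ (two would form a $4$-cycle, contradicting girth $\ge 6$), at most $m-1$ parities are deleted, leaving a set $N'_H(j_s)$ with $|N'_H(j_s)|\ge(t-1)-(m-1)=t-m$. Next, process $s=1,\dots,m$ in turn and greedily pick $P_s\subseteq N'_H(j_s)$ with $|P_s|=c_s-1$, avoiding every $v$ for which $N_H(v)$ meets $N_H(v')$ for some already chosen $v'\in\bigcup_{s'<s}P_{s'}$. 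The claimed family consists of the $m$ direct sets $\{j_1\},\dots,\{j_m\}$ together with the $\sum_s(c_s-1)=t-m$ single-parity sets $S(v,j_s)$, $v\in P_s$ — in total $t$ sets, matched to the $t$ query positions ($c_s$ of them to the $c_s$ positions carrying index $j_s$).

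The substantive step is that the greedy never runs out of candidates, and this is exactly where girth $\ge 8$ is needed. I would show that each previously chosen parity $v'$ can block at most one vertex of $N'_H(j_s)$: if it blocked two distinct $v_1,v_2\in N'_H(j_s)$, witnessed by $w_1\in N_H(v_1)\cap N_H(v')$ and $w_2\in N_H(v_2)\cap N_H(v')$, then $j_s,v_1,w_1,v',w_2,v_2$ would be six distinct vertices forming a $6$-cycle in $H$ — distinctness of $w_1$ from $w_2$ and of each $w_i$ from $j_s$ follows from the no-$4$-cycle property, while $v'\notin\{v_1,v_2\}$ because $v'\in N_H(j_{s'})$ whereas $v_1,v_2\in N'_H(j_s)$ were kept non-adjacent to $j_{s'}$ — contradicting girth $\ge 8$. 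Hence at most $\bigl|\bigcup_{s'<s}P_{s'}\bigr|=\sum_{s'<s}(c_{s'}-1)$ vertices of $N'_H(j_s)$ are blocked, and since $\sum_{s'\le s}(c_{s'}-1)\le\sum_{s'=1}^{m}(c_{s'}-1)=t-m\le|N'_H(j_s)|$, at least $c_s-1$ admissible candidates remain and $P_s$ can be formed.

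Finally, pairwise disjointness of the $t$ chosen sets is a short case analysis: two direct sets are disjoint since the $j_s$ are distinct; a direct set $\{j_{s'}\}$ misses every $S(v,j_s)$ because $j_{s'}\ne v$ and either $s'=s$ (so $j_s$ was explicitly removed) or $s'\ne s$ and $v\in N'_H(j_s)$ is non-adjacent to $j_{s'}$; two single-parity sets at the same value have distinct centres and neighbourhoods meeting only in that value (no $4$-cycle); and two single-parity sets at different values are disjoint precisely by the greedy rule $N_H(v)\cap N_H(v')=\varnothing$. For the PIR statement the query is $(x_i,\dots,x_i)$, so $m=1$, $c_1=t$: the deletion step is vacuous, no cross-value conflicts arise, and the $t-1$ sets $S(v,i)$ ($v$ any $t-1$ neighbours of $i$, which exist by hypothesis~(i)) are pairwise disjoint by the no-$4$-cycle property alone, so together with $\{i\}$ they give $t$ disjoint recovery sets for $x_i$ — which is why girth $\ge 6$ already suffices there. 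I expect the blocking count in the third paragraph to be the only real obstacle; everything else is bookkeeping.
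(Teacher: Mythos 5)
Your proof is correct and reconstructs the argument of the cited source~\cite{dimakis-batch}: the blocking count in your third paragraph (each previously chosen parity blocks at most one candidate in $N'_H(j_s)$, since two blocked candidates would force a $6$-cycle, impossible under girth $\ge 8$) is exactly the ``each repair group of $x_i$ meets at most one repair group of $x_j$'' property that the present paper states as Lemma~\ref{lemma3-dimakis}. Since the theorem is cited here without proof, there is no in-paper argument to compare against word-for-word, but your greedy construction is sound, and your PIR specialization correctly isolates that only the no-$4$-cycle property (girth $\ge 6$) is needed once the query involves a single index.
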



It follows from Theorem~\ref{theorem1-dimakis} that constructions of left-regular bipartite graphs  without short cycles yield constructions of batch and PIR codes.
In what follows, we use this approach in order to construct \vit{batch and PIR codes with good parameters}. 
Specifically, we use known constructions of good hypergraphs, which can be 
mapped to bipartite graphs without short cycles, in order to construct good codes. 

\section{Asynchronous batch codes}
\label{sec:abc}
In this section, we introduce a new special family of batch codes, termed \emph{asynchronous batch codes}. 
Assume that $\code$ is a linear $[n, k, t]$ batch code over $\ff$ as in Definition \ref{def:linear_batch}, used for retrieving a batch of $t$ symbols $(x_{\ell_1}, x_{\ell_2}, \cdots, x_{\ell_t})$, $\ell_{i} \in [k]$, $i \in t$, in parallel from a coded database that consists of $n$ servers, such that at most one symbol is retrieved from each server. To this end assume that the queries arrive at random times, and that the response time of the servers for different requests varies, and thus some symbol $x_{\ell_j}$ (w.l.o.g.) can be retrieved faster than the other symbols. In asynchronous retrieval mode, once $x_{\ell_j}$ was retrieved, it is possible to
retrieve any other request $x_{\ell_{t+1}}$, $\ell_{t+1} \in [k]$, in parallel to retrieving of $(x_{\ell_1}, x_{\ell_2}, \cdots, x_{\ell_{j-1}}, x_{\ell_{j+1}}, \cdots, x_{\ell_t})$, without reading more than one symbol from each server, and without changing (interfering with) the servers in use in retrieving the other symbols. In that way, the asynchronous batch codes support (asynchronous) retrieval of $t$ symbols in parallel. We proceed with a formal definition. 

\begin{definition}
An asynchronous (linear primitive multiset) $[n, k, t]$-batch code $\code$ is a (linear primitive multiset) batch code with the additional property that for any legal batch of queries $(x_{\ell_1}, x_{\ell_2}, \cdots, x_{\ell_t})$, for any $j \in [t]$, 
it is always possible to replace $x_{\ell_j}$ by some $x_{\ell_{t+1}}$, $\ell_{t+1} \in [k]$, such that 
$x_{\ell_{t+1}}$ is retrieved from the servers not in use for the retrieval of $x_{\ell_1}, x_{\ell_2}, \cdots, x_{\ell_{j-1}}, x_{\ell_{j+1}}, \cdots, x_{\ell_t}$, without reading more than one symbol from each server. Note that a component of this definition is an algorithm that can specify at each step which available recovery set will be used.
\end{definition}

\begin{example}
Consider the systematic $[8,4,3]_2$ batch code $\code$ generated by the matrix
\[
\bldG \; = \; \left(
\begin{matrix}
1&0&0&0&1&0&1&0 \\
0&1&0&0&1&0&0&1 \\
0&0&1&0&0&1&1&0  \\
0&0&0&1&0&1&0&1
\end{matrix}
\right) \; .
\]
There are three disjoint recovery sets for each $x_i$, $i \in [4]$: 
\begin{itemize}
\item $x_1 = y_1, \; x_1 = y_2 + y_5, \; x_1 =y_3 + y_7\;; $
\item $x_2 = y_2, \; x_2 = y_1 + y_5, \; x_2 =y_4 + y_8\;; $
\item $x_3 = y_3, \; x_3 = y_4 + y_6, \; x_3 =y_1 + y_7\;; $
\item $x_4 = y_4, \; x_4 = y_3 + y_6, \; x_4 =y_2 + y_8\;. $
\end{itemize}

We verify that $\code$ is an asynchronous batch code which supports any $t=2$ requests. We observe that irrespectively of the first query $x_i$ and the corresponding recovery set which is being served, the system is able to serve any additional query $x_j$, $j \in [4]$.
Indeed, any recovery set uses at most two different variables $y_{\ell_1}$ and $y_{\ell_2}$, $\ell_1, \ell_2 \in [8]$. Since the new query $x_j$ has three possible different recovery sets,  at least one of these sets contains neither $y_{\ell_1}$ nor $y_{\ell_2}$, and therefore it can be used without 
using the same server more than once. 

We conclude that $\code$ is an asynchronous $[8,4,2]$-batch code.
On the other hand, $\code$ is not an asynchronous $[8,4,3]$-batch code, since if the pair of requests $(x_1, x_2)$ is being served using the recovery sets $x_1=y_3+y_7$ and $x_2=y_1+y_5$, respectively, then an additional request $x_1$ can not be served without using the same server more than once.
\end{example}

There is a conceptually simple but computationally expensive (and thus difficult to use) necessary and sufficient condition to check if a given $k \times n$ generator matrix gives an asynchronous $[n,k,t]$ batch code over $\ff$. It is as follows. 

Consider the hypergraph $H(V,E)$ whose vertices are the columns of the generator matrix, and whose hyperedges are the (containment-wise) minimal subsets of columns the elements of which modulo 2 sum to a unit vector, i.e.~the hyperedges correspond to the recovery sets of an information symbol. Give each hyperedge a label $\ell$, $\ell \in [k]$. This label denotes the information symbol, which the edge recovers. Then the generator matrix generates an asynchronous $[n,k,t]$ batch code if and only if
 there exists a hypergraph $H'(E,F)$ such that:
\begin{enumerate}
    \item $\varnothing \in F$ \newline ($F$ is \emph{non-empty}),
    \item For any $f\in F$ with $|f|<t$, and for any $i \in [k]$, there exists $e\in E$ with label $i$ such that $f\cup\{e\}\in F$ \newline (\emph{extension property}),
    \item For any $f\in F$ and and for any $f'\subseteq f$ we have $f'\in F$ \newline (\emph{hereditary property}),
    \item For any $f\in F$ and $e,e'\in f$ with $e\not=e'$ we have $e\cap e'=\varnothing$ (\emph{pairwise disjointness}).
\end{enumerate}

The intuition is that the collection $F$ consists of sets of recovery sets that are allowed to be used simultaneously in the system. Once a suitable $H'=(E,F)$ is found, this information can be provided to (hard-coded in) the system that is using this batch code, in order to facilitate its operation and guarantee correctness.

This description as a hypergraph provides a way to algorithmically check whether a given binary matrix gives an asynchronous batch code that supports any $t$ queries. The matrix has not be systematic in order to use the algorithm. However, the computational complexity increases very fast as $t$ increases.

It is straightforward to see that any asynchronous $[n, k, t]_q$ batch code is an $[n, k, t]_q$ batch code. The opposite, however, does not always hold. 

\begin{example}
Consider batch codes, which are obtained by taking simplex codes as suggested in \cite{WKC2015}. 
The $[7,3,4]$-batch code $\code$ is formed, for example, by the generator matrix 
\[
\bldG = \left(
\begin{matrix}
1&0&0&1&1&0&1 \\
0&1&0&1&0&1&1 \\
0&0&1&0&1&1&1
\end{matrix}
\right)
\]
is a $[7,3,4]_2$ batch code. Assume that the query $(x_1, x_1, x_1, x_1)$ was submitted by the users. Then, one copy of $x_1$
is retrieved from $y_1$, and for each of the remaining three copies of $x_1$, at least two symbols of $\bldy$ have to be used. 
Assume that the query that uses $y_1$ has been served, but the remaining queries are still being served. 
If the next query $x_2$ arrives, it is impossible to serve it without accessing one of the servers containing 
$y_2, \cdots, y_7$ at least twice. Therefore, $\code$ is not an asynchronous $[7,3,4]_2$ batch code. 
\end{example}

We denote by $\cA(k,t)$ and $\cB(k,t)$ the minimal length $n$ of the asynchronous and general linear $[n,k,t]$-batch codes, respectively.   
Similarly, by $\cA(k,t,r)$ we denote the minimal length $n$ of the asynchronous $[n,k,t,r]$-batch codes.

\begin{example}
Consider the case $k =2$ and $t=2$. 
It is known that $\cB(2,2) = 3$, and the only generator matrix $\bldG$ for such a code (up to a permutation of columns) is:  
\[
\bldG = \left( \begin{array}{ccc}
1 & 0 & 1 \\
0 & 1 & 1 \\
\end{array} \right) \; 
\] 
(it is straightforward to see that any binary $2 \times 3$ matrix $\bldG$ with repeating columns does not produce a batch code with $t = 2$). 

This $\bldG$ does not correspond to an asynchronous batch code with $t=2$. To see that, consider the sequence of requests $x_1, x_1, x_2$. 
The first $t=2$ request are recovered as $x_1 = y_1$ and $x_1 = y_2 + y_3$. Assume that $x_1 = y_1$ was served first, and now try to assign a recovery 
set for $x_2$. It is impossible. We conclude that $\cA(2,2) \ge 4 > 3 = \cB(2,2)$. 

On the other hand, the code with
\[
\bldG = \left( \begin{array}{cccc}
1 & 1 & 0 & 0  \\
0 & 0 & 1 & 1 \\
\end{array} \right) \; 
\] 
is an asynchronous batch code, and therefore $\cA(2,2) = 4$. 
\end{example}

\begin{lemma}(\hspace{-0.1ex}\cite[Lemma 3]{dimakis-batch})
\label{lemma3-dimakis}
Let $\code$ be an $[n, k]$ systematic code represented by the bipartite graph $G(A, B, E)$. 
Assume that there exists an \emph{induced subgraph} $H(A, B', E')$ of $G$, that is, $B' \subseteq B$ and $E' = \{e\in E \, :\, |e\cap B'|=1 \}$, such that: 
\begin{itemize}
\item[(i)] Each vertex in $A$ has degree at least $t$ in the bipartite graph $H$.
\item[(ii)] The graph $H$ has girth at least 8.
\end{itemize}
Then, each message symbol has at least $t$ disjoint recovery sets. Moreover, for any $i,j \in [k]$, $i \neq j$, any one of the disjoint recovery sets for the message symbol $x_i$ has common symbols 
with at most one of the disjoint recovery sets for the message symbol $x_j$. 
\end{lemma}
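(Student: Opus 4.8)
The plan is to reason directly about repair groups in the induced subgraph $H(A,B',E')$. Fix a message symbol $x_i$, i.e.~a vertex $a_i \in A$. By hypothesis (i), $a_i$ has degree at least $t-1$ in $H$, so it has at least $t-1$ distinct neighbours $b_1,\dots,b_{t-1} \in B'$. For each such parity vertex $b_\ell$, its other neighbours in $H$ are information symbols $a_i$ together with some set $N(b_\ell) \setminus \{a_i\}$; since $\code$ is systematic with $\bldG = [\bldI\,|\,\bldA]$, the parity symbol $y$ corresponding to $b_\ell$ equals the sum of the information symbols incident to $b_\ell$, hence $x_i$ can be recovered as $y$ minus the sum of the symbols in $N(b_\ell)\setminus\{a_i\}$. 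This gives a candidate repair group $R_\ell = \{b_\ell\} \cup \big(N(b_\ell)\setminus\{a_i\}\big)$, and together with the trivial group $\{a_i\}$ we have $t$ repair groups. First I would show these $t$ groups are pairwise disjoint: if two of them, say $R_\ell$ and $R_m$, shared a vertex, that vertex would be either a common parity neighbour (impossible, $b_\ell \neq b_m$) or a common information vertex $a_j$, $j\neq i$; but then $a_i, b_\ell, a_j, b_m$ would form a $4$-cycle in $H$, contradicting girth $\ge 8$. Also $\{a_i\}$ is trivially disjoint from each $R_\ell$ because no $R_\ell$ contains $a_i$.

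Next I would establish the "cross" property. Fix $i \neq j$ and a repair group $R$ for $x_i$; I want to show $R$ meets at most one repair group for $x_j$. The information-symbol repair group $\{a_j\}$ for $x_j$ can intersect $R$ only in the vertex $a_j$; and any nontrivial repair group $R'_m$ for $x_j$ (built from a parity neighbour $b'_m$ of $a_j$) containing $a_j$ would again be impossible since a nontrivial group never contains its own information vertex. So the relevant case is that $R$ meets two distinct nontrivial repair groups $R'_m, R'_{m'}$ of $x_j$ in vertices $w \in R \cap R'_m$ and $w' \in R \cap R'_{m'}$ with $w \neq w'$ (if $w = w'$ it lies in $R'_m \cap R'_{m'}$, contradicting the disjointness just proved for $x_j$'s groups). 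The key step is to trace a short closed walk through $H$: from $a_i$ out to $w$ along $R$'s structure, from $w$ to $b'_m$ to $a_j$, from $a_j$ to $b'_{m'}$ to $w'$, and from $w'$ back to $a_i$ along $R$. Carefully bookkeeping the at most two parity vertices and two information vertices touched in $R$ (recall $R = \{b\} \cup (N(b)\setminus\{a_i\})$ has exactly one parity vertex $b$), one produces a cycle of length at most $6$ through $a_i, b, \dots, a_j$, contradicting girth $\ge 8$.

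The main obstacle, and the step that needs the most care, is the last one: the closed walk assembled from the two repair groups of $x_j$ and the single repair group $R$ of $x_i$ might a priori be a non-simple walk rather than a genuine cycle, so one has to argue that after discarding repeated vertices it still contains a cycle, and then check that its length is strictly below $8$ in every configuration (in particular ruling out the degenerate cases where $b = b'_m$, or where $w$ or $w'$ coincides with $a_j$ or with the parity vertex $b$). I expect a short case analysis on how many of the vertices $\{a_i, b, a_j, b'_m, b'_{m'}, w, w'\}$ collapse, each case either yielding a $4$-cycle or a $6$-cycle in $H$; since girth $\ge 8$ forbids both, the cross property follows. The disjointness part, by contrast, is immediate once the girth bound is invoked.
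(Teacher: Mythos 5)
The paper does not actually prove this lemma; it quotes it verbatim from Rawat \emph{et al.}~\cite[Lemma~2]{dimakis-batch} and immediately uses it to justify Theorem~\ref{thrm-asynchronous}, whose proof is also omitted. So there is no in-paper argument to compare against, and your proposal stands or falls on its own.

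Your first part is sound: the $t-1$ nontrivial repair groups $R_\ell = \{b_\ell\}\cup\bigl(N(b_\ell)\setminus\{a_i\}\bigr)$ have distinct parity vertices, and two of them sharing an information vertex $a_s$ would force the $4$-cycle $a_i,b_\ell,a_s,b_m$ in $H$ (girth $\ge 6$ already suffices here). The gap is in the cross property, and it sits precisely in the step you yourself flagged as delicate. You reduce to ``$R$ meets two distinct \emph{nontrivial} groups of $x_j$'' by observing that $\{a_j\}$ intersects $R$ only in $a_j$ and that $a_j\notin R'_m$ for any nontrivial $R'_m$; but that does not rule out $R$ meeting $\{a_j\}$ in $a_j$ \emph{and simultaneously} meeting a nontrivial $R'_{m'}$ in some \emph{other} vertex. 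Concretely, let $b\in B'$ be a common parity neighbour of $a_i$ and $a_j$ in $H$ (a single common neighbour creates only a path $a_i,b,a_j$, so girth $\ge 8$ does not exclude it), and take $R=\{b\}\cup\bigl(N(b)\setminus\{a_i\}\bigr)$. Then $a_j\in R$, so $R$ meets $\{a_j\}$; and $R$ also meets $R'=\{b\}\cup\bigl(N(b)\setminus\{a_j\}\bigr)$, a nontrivial repair group for $x_j$, in the vertex $b$. The closed walk you would build, $a_i,b,a_j,b,a_i$, is degenerate and contains no cycle, so your anticipated ``each collapsed configuration yields a $4$- or $6$-cycle'' is false in this case, and no contradiction to girth $\ge 8$ arises. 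In other words, the case analysis you expect to go through would in fact surface a configuration in which the literal claim (``at most one'' among \emph{all} $t$ groups, trivial one included) fails. To salvage the argument you would need to either (a) prove and use the weaker but correct statement that $R$ meets at most one \emph{nontrivial} group of $x_j$ (your $6$-cycle argument does give this, with the two further subcases $b=b'_m$ and $a_s=a_{s'}$ each yielding a $4$-cycle), and then handle $\{a_j\}$ separately inside the proof of Theorem~\ref{thrm-asynchronous} using disjointness of the currently busy recovery sets, or (b) keep a more refined count of which servers a busy recovery set can actually block. As written, the proposal does not establish the stated conclusion.
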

It turns out, that the conditions in Theorem~\ref{theorem1-dimakis} yield asynchronous batch codes. More formally:

\begin{theorem}
\label{thrm-asynchronous}
Let $\code$ be an $[n, k]$ systematic code represented by the bipartite graph $G(A, B, E)$. 
Assume that there exists an \emph{induced subgraph} $H(A, B', E')$ of $G$, that is, $B' \subseteq B$ and $E' = \{e\in E \, :\, |e\cap B'|=1 \}$, such that: 
\begin{itemize}
\item[(i)] Each vertex in $A$ has degree at least $t$ in the bipartite graph $H$.
\item[(ii)] The graph $H$ has girth at least 8.
\end{itemize}
Then, $\code$ is an \underline{asynchronous} $[n, k, t]$ batch code.
\end{theorem}
 
\begin{proof}
Assume that $\code$ is an $[n, k]$ systematic code satisfying the conditions in the theorem. 
We prove that $\code$ is an asynchronous $[n,k,t]$ batch code. From Theorem \ref{theorem1-dimakis}, $\code$ is an $[n, k, t]$ batch code, and therefore 
there exist $t$ disjoint recovery sets for each of the information symbols $x_i$, $i \in [k]$. 

Assume that a set of $t-1$ queries $x_{i_1}, x_{i_2}, \ldots, x_{i_{t-1}}$ is currently being served using the recovery sets $S_{i_1}, S_{i_2}, \cdots,$ $S_{i_{t-1}}$, respectively (if the number of queries being served is smaller than $t-1$, then exactly the same proof applies). Let $x_j$ be the new query. We show that (irrespectively of the recovery sets used for the above $t-1$ queries) there exists a recovery set for $x_j$, which is pairwise disjoint with each of these $t-1$ recovery sets. 

Consider $t$ recovery sets for $x_j$: $T_1, T_2, \cdots, T_t$. Due to Lemma~\ref{lemma3-dimakis}, each of the sets $S_{i_j}$, $j \in [t-1]$, overlaps with at most one of the sets $T_\ell$, $\ell \in [t]$. Therefore, due to the pigeon-hall principle, there exists a set $T_{\ell'}$, ${\ell'} \in [t]$, which is pairwise disjoint with all of $S_{i_1}, S_{i_2}, \cdots, S_{i_{t-1}}$, and it can be used to recover $x_j$. We conclude that $\code$ is an $[n,k,t]$ asynchronous batch code. 
\end{proof}

The above theorem shows that the graph-based batch codes given as in Theorem ~\ref{theorem1-dimakis} are asynchronous with the same parameters.
We also present the following converse result. 

\begin{proposition}
Let $\code$ be a systematic asynchronous $[n, k, t]$-batch code generated by the matrix $\bldG$ with the least number of ones, represented by the bipartite graph $G(A, B, E)$. Additionally, let each recovery set be either a singleton or a single column in the redundancy part together with columns in the systematic part. 
Assume that there exists an \emph{induced subgraph} $H(A, B', E')$ of $G$, that is, $B' \subseteq B$ and $E' = \{e\in E \, :\, |e\cap B'|=1 \}$, such that
for each vertex $a_i\in A$ there exists at least one neighboring vertex $b' \in B'$ with degree $\ge 2$ in $H$.
Then, each vertex in $A$ has degree at least $t$ in $H$.
\end{proposition}
\begin{proof}
Assume to the contrary that a vertex $a_j \in A$ has degree $t-1$ in $G$. Then, row $j$ in $\bldG$ has weight $t$. By condition (i), there exists at least one column $\bldb'$ in the redundancy part of $\bldG$ with weight larger or equal to 2, $\bldb'$ has ones in rows $j$ and $i$, $i \neq j$.

Next, assume that the information symbol $x_i$ (corresponding to the vertex $a_i \in A$) is currently being recovered using the (non-singleton) recovery set $R_{a_i}$, which includes the column $\bldb'$ and the singleton column with one in row~$j$. 
Observe that if the column $\bldb'$ is never used to recover $x_i$, we can replace one by zero in position $i$ of $\bldb'$. The resulting batch code has the same parameters, but has a smaller number of ones in the generator matrix $\bldG$, thus yielding a contradiction to the minimality.

Notice that the two columns with ones in position $j$ (the singleton column and $\bldb'$) are currently busy recovering $x_i$. Therefore, if the additional $t-1$ queries are $(x_j, x_j, \cdots, x_j)$, then there are not enough recovery sets for $x_j$ available to serve all those requests. This is in contradiction to the fact that $\code$ is an asynchronous $[n,k,t]$ batch code.
\end{proof}




\section{Hypergraph Theory}
\label{sec:hyper}
In~\cite{brown1}, \cite{brown2}, Brown, Erd\H os and S\'os pose the following extremal combinatorial problems on $\varrho$-graphs. Let $f^{(\varrho)}(n;\kappa,s)$ denote the smallest $t$ such that every $\varrho$-graph on $n$ vertices with $t$ hyperedges contains at least one sub-$\varrho$-graph on $\kappa$ vertices with $s$ hyperedges. Therefore $f^{(\varrho)}(n;\kappa,s)-1$ is the maximum size of an $\varrho$-graph whose no set of $\kappa$ vertices contain $s$ or more hyperedges. The authors are interested in bounds on this quantity for fixed $\varrho$, $\kappa$ and $s$. The resolution of the first interesting open case $f^{(3)}(n;6,3)$, known as the $(6,3)$-problem, by Ruzsa and Szemer\'edi~\cite{ruzsa} is a classical result in extremal combinatorics. Erd\H os, Frankl and R\"odl~\cite{erdos} extended this result to any fixed $\varrho$, also giving an easier construction for the lower bound, thus solving the so-called $(3\varrho-3,3)$-problem. There are various later generalizations of~\cite{ruzsa} and~\cite{erdos}, see for example~\cite{alon} and the references therein, and the survey~\cite{furedi}.

In what follows, we show that finding the maximum size (the number of hyperedges) of a hypergraph with a given Berge girth is essentially a generalization of the $(6,3)$-problem for 3-graphs which would be called the $(\kappa\varrho-\kappa,\kappa)$ problem for $\varrho$-graphs in this terminology. Finally, we apply the resolution of the $(3\varrho-3,3)$ problem in~\cite{erdos} to batch codes.

\begin{theorem}\label{berge girth}
Let $G^{(\varrho)}(n,\kappa)$ be the maximum size of an $\varrho$-graph with $n = |V|$ containing no Berge cycle of length $\kappa$ or less (i.e.~of Berge girth at least $\kappa+1$). Let $F^{(\varrho)}(n;h,s)=f^{(\varrho)}(n;h,s)-1$ be the maximum size of an $\varrho$-graph, no $h$ of whose vertices contain $s$ or more hyperedges. Then $F^{(\varrho)}(n;\kappa\varrho-\kappa,\kappa) = G^{(\varrho)}(n,\kappa)$.
\end{theorem}
We prove this theorem in more generality by using the following lemmas.

\begin{lemma}\label{help}
\label{lemma:berge}
For a Berge-connected hypergraph $\mathcal{G}(V,E)$ with $|V|\geq 2$ we have: 

\begin{enumerate}
\item $\sum_{e\in E} (|e|-1) \geq |V|-1.$
\item $\mathcal{G}(V,E)$ contains no Berge cycles (is a Berge tree) if and only if
$\sum_{e\in E} (|e|-1) = |V|-1$. 
\item
$\mathcal{G}(V,E)$ contains exactly one Berge cycle if and only if $\sum_{e\in E} (|e|-1) = |V|.$
\end{enumerate}
\end{lemma}

\begin{proof}
Consider the bipartite incidence graph $G(E,V,I)$ of $\mathcal{G}(V,E)$ where $I \triangleq \{\{e,v\}\, | \, v\in e, \, v\in V, \, e \in E\}$. It is a connected graph with $|V|+|E|$ vertices and at least $|V|+|E|-1$ edges.
The hypergraph $\mathcal{G}(V,E)$ does not have cycles if and only if so does $G(E,V,I)$. The hypergraph $\mathcal{G}(V,E)$ has one Berge cycle if and only if  $G(E,V,I)$ has one cycle. 

If $G(E,V,I)$ has no cycle, then there exists an ordering on $E = \{ e_1, e_2, \cdots, e_{|E|}\}$, such that $e_1$ is incident with $\varrho$ vertices in $V$, $e_2$ is incident with $\varrho-1$ additional vertices in $V$, and $e_{|E|}$ is incident with $\varrho-1$ additional vertices in $V$. By a simple counting argument we obtain Condition 1. 
Conditions 2 and 3 follow from the standard graph-theoretic arguments about spanning trees with an added edge.
\end{proof}

Next, we introduce the following definition. 
\begin{definition}
\label{def:condition}
A hypergraph \emph{satisfies the condition of the $(\kappa \varrho-\kappa,\kappa)$ problem (or $(\kappa \varrho-\kappa,\kappa)$-condition, in short)} if no set of $\kappa \varrho-\kappa$ of its vertices contains $\kappa$ or more hyperedges.
\end{definition}

As we show in the sequel, an $\varrho$-graph satisfying the $(\kappa \varrho-\kappa,\kappa)$-condition can be modified to additionally have Berge girth at least $\kappa+1$ while keeping the same number of hyperedges; an $\varrho$-graph of Berge girth at least $\kappa+1$ already satisfies the $(\kappa \varrho-\kappa,\kappa)$-condition.

\begin{lemma}
An $\varrho$-graph of Berge girth at least $\kappa+1$ satisfies the $(\kappa \varrho-\kappa,\kappa)$-condition.
\label{lemma:condition}
\end{lemma}

\begin{proof}
Consider any $\kappa$ hyperedges of this graph. They do not induce any Berge cycles. 

For each of the Berge-connected components (maximal connected subhypergraphs) $\graph'(V',E')$ of the hypergraph induced by these $\kappa$ hyperedges, we have $\sum_{e\in E'} (|e|-1) = |V'|-1$ by Condition 2 of Lemma~\ref{help}, and therefore $\sum_{e\in E} (|e|-1) = \kappa(\varrho-1) = |V|-c$ for the hypergraph induced by these $\kappa$ hyperedges, where $c\geq 1$ is the number of Berge-connected components. Hence the number of vertices induced by these $\kappa$ hyperedges is $\kappa(\varrho-1)+c > \kappa \varrho-\kappa$. Thus the hypergraph satisfies the $(\kappa \varrho-\kappa,\kappa)$-condition.
\end{proof}



\begin{lemma}
An $\varrho$-graph that satisfies the $(\kappa\varrho-\kappa,\kappa)$-condition can be changed (its hyperedges can be re-wired) so that it still has the same number of hyperedges, still satisfies the $(\kappa\varrho-\kappa,\kappa)$-condition, and has Berge girth at least $\kappa+1$.
\label{lemma:rewire}
\end{lemma}

\begin{proof}
If an $\varrho$-graph satisfies the $(\kappa\varrho-\kappa,\kappa)$-condition, then from Definition~\ref{def:condition}, the total number of vertices used by any $\kappa$ hyperedges is at least $\kappa(\varrho-1)+1$. 

We consider two cases. 
\begin{description}
\item[\it Case 1: the graph induced by these hyperedges is connected.] $\,$ \newline
In this case, the graph would not contain any Berge cycles by Lemma~\ref{help}. Therefore, there is an ordering of the $\kappa$ hyperedges where the first edge would use $\varrho$ new vertices and each of the next edges would use $\varrho-1$ vertices not used so far. 
Thus, there is no Berge cycle on $\le \kappa$ hyperedges. 
\item[\it Case 2: the induced graph is disconnected.] $\,$ \newline
Consider a Berge-connected component which has some small Berge cycles, i.e.~the sub-hypergraph $\graph'(E',V')$ induced by the vertices of this component satisfies $\sum_{e\in E'} (|e|-1) > |V'|-1$, and there is a cycle of $\kappa$ or fewer hyperedges. This component has fewer than $\kappa$ hyperedges. Otherwise, the hyperedges of a small cycle together with possibly some other hyperedges forming a connected component of $\kappa$ hyperedges violate the $(\kappa \varrho -\kappa,\kappa)$-condition. This is because there would have to be an ordering of $\kappa$ of the hyperedges (which span a connected subhypergraph) where the first edge uses $\varrho$ new vertices, and each next edge uses at most $\varrho-1$ new vertices, with one of them (the last hyperedge of a cycle) using strictly fewer than $\varrho-1$ new vertices. Thus, indeed, these $\kappa$ hyperedges would violate the $(\kappa \varrho -\kappa,\kappa)$-condition. Take any hyperedge $e$ of a cycle of $\kappa$ or fewer hyperedges and re-wire it by deleting one of its vertices and adding in another vertex: there are at least two vertices of this hyperedge shared with the union of the other hyperedges of the cycle. Delete one of them from $e$ and add into $e$ a vertex from outside the connected component.
The procedure either strictly reduces the number of connected components of size smaller than $\kappa$ hyperedges, or strictly increases the total number of vertices in such components, therefore we can only repeat it a finite number of times, and eventually, when it can not be repeated anymore we will have no Berge cycles with $\kappa$ or fewer hyperedges (see Lemma~\ref{help}).
\end{description}
\end{proof}

Lemmas~\ref{help}--\ref{lemma:rewire} imply Theorem~\ref{berge girth}.

\section{PIR codes from hypergraphs of Berge girth at least 3}
 
In what follows, we remark that optimal hypergraphs of Berge girth at least 3 can be used in constructing PIR codes. 
\begin{definition}
A $\tau-(\eta,\varrho,\lambda)$ packing design is an $\varrho$-graph consisting of $\eta$ vertices (called points) and of edges (called blocks) such that each $\tau$-tuple of vertices (points) is contained in at most $\lambda$ edges (blocks). 
\end{definition}

Consider an $\varrho$-graph $\graph(V,E)$, where $V$ is a point set and $E$ is a block set, $|V|=\eta$. 
If $\graph(V,E)$ has Berge girth at least 3, then it is also a so-called \emph{$2-(\eta,\varrho,1)$ packing design}. 
When each pair of points is contained in a unique block, we have a \emph{Steiner 2-design}, also known as a combinatorial \emph{$2-(\eta,\varrho,1)$ block design}, or a \emph{$(\eta,V,1)$-BIBD (balanced incomplete block design)}. 

The maximum size $D(\eta,\varrho)$ of a packing design (i.e.~the maximum number of blocks in it) is bounded from above by the well-known improved 1st and 2nd Johnson bounds~\cite{johnson2}, see also~\cite{mills}. 
It follows from the result on the existence of designs~\cite{keevash}, that for all sufficiently large $\eta$, there is a packing design attaining either the improved 1st or 2nd Johnson bound, see also~\cite{horsley} referring to an earlier version of~\cite{keevash}. 

This means that, for large enough $\eta$, PIR codes constructed using packing designs following 
the ideas of Theorem~\ref{theorem1-dimakis} always exist, as it is stated in the following theorem.

\begin{theorem}
It holds
\begin{equation}
\lim_{\eta\rightarrow\infty} \frac{D(\eta,\varrho)}{{\eta\choose 2} / {\varrho\choose 2}} = 1 \; .
\label{eq:steiner}
\end{equation}
\end{theorem}

The existence of Steiner 2-designs for all admissible large enough $\eta$, i.e.~large enough $\eta$ satisfying some simple necessary divisibility conditions, follows from earlier works of Wilson~\cite{wilson1}, \cite{wilson2}, \cite{wilson3}. In these works, however, there is no attempt to understand the size of the lower bound on such $\eta$.

We obtain constructions of families of PIR codes with $k$ information symbols, whose redundancy $\rho = \rho(k)$ is close to a solution of the equation:  
\[
{\rho \choose 2} \Big/ {\varrho \choose 2} = k \; . 
\]
In particular, we obtain $\rho = \Theta(\sqrt{k})$ for $t = 4, 5, 6$ (equivalently, for $\varrho = 3, 4 ,5$).  
For $t \ge 7$  (equivalently, for $\varrho \ge 6$), the existence of the optimal PIR codes follows from 
the existence of the optimal Steiner-2-designs. 

To this end, we note that Fazeli, Vardy and Yaakobi in~\cite{Fazeli} also use Steiner 2-designs to construct PIR codes. In this work, however, we 
obtain similar results by using Theorem~\ref{theorem1-dimakis}.  


\section{Batch codes from hypergraphs of Berge girth at least 4}
\label{sec:abchypergraphs}
Bounds and constructions for $\varrho$-graphs $\mathcal{G}(V,E)$ with $n$ vertices of Berge girth at least 4 can be given via the $(3\varrho-3,3)$-problem in the language of the $(6,3)$-problem, as seen from Theorem~\ref{berge girth} and Lemmas~\ref{lemma:berge}. Bounds apply directly, while constructions may need to be modified slightly to lose small Berge cycles. Erd\H os, Frankl and R\"odl~\cite{erdos} address precisely the $(3 \varrho -3,3)$-problem. The authors modify the celebrated construction of Behrend~\cite{behrend} of a large subset of  $\{1,\ldots,N\}$ which contains no 3-term arithmetic progression (\emph{3AP-free}), that is, no three distinct numbers of the form $a$, $a+b$ and $a+2b$. 
This way, the authors of~\cite{erdos} construct $\varrho$-graphs with the number of hyperdges asymptotically larger than $n^{2-c}$ for any $c>0$. The construction produces a hypergraph of Berge girth at least 4, so there is no need to modify the construction. The authors also prove an upper bound $o(n^2)$ on the maximum number of hyperedges, using an early version of the Szemer\'edi's Regularity Lemma, see for example~\cite{komlos} and~\cite{skokan}.

For the sake of completeness, we recall the construction in~\cite{erdos} for large $\varrho$-graphs of Berge girth at least 4. This construction gives rise to primitive multiset linear batch codes. The authors of~\cite{erdos} prove the following Lemma, with the proof closely related to Behrend's original construction of large 3AP-free sets in~\cite{behrend}.

\begin{lemma}
\label{lemma-3PA}
There exists a set of positive integers $A\subseteq \{1,2,\ldots, n\}$ not containing three terms of any arithmetic progression of length $\varrho$, such that $$|A|\geq \frac{n}{e^{c\log \varrho \sqrt{\log n}}}$$ for some absolute constant $c>0$.
\end{lemma}

\begin{proof}
Omitted. Please see~\cite{erdos} and~\cite{behrend} for more details.
\end{proof}

In~\cite{erdos}, the authors construct an $\lfloor n/\varrho \rfloor$-by-$\varrho$ rectangular grid of vertices, and lines of cardinality $\varrho$, intersecting each column, are hyperedges. The set of `slopes' is restricted to a set $A$ satisfying Lemma~\ref{lemma-3PA}, so that the hypergraph has Berge girth larger or equal to 4, see~\cite{erdos} for more details. However, the hypergraph might have Berge girth larger or equal to 3 if we do not restrict the set of slopes, thus giving rise to good PIR codes.

By mapping the hypergraph $\mathcal{G}(V,E)$ constructed in~\cite{erdos} back onto its bipartite incidence graph $G(E,V,I)$,
and by using the notation for batch codes, we obtain a bipartite graph of girth at least $8$ with $(n-k)^{2-\epsilon}$ left vertices and $n-k$ right vertices. The corresponding graph-based asynchronous batch code has $k = (n-k)^{2-\epsilon}$, and so its redundancy is bounded from above by $\rho(k) = n - k =  
O\left({k}^{1/(2-\epsilon)}\right)$ for any $\epsilon>0$, and for any fixed $t \ge 3$. 

We note that the upper bound in~\cite{erdos} similarly yields the lower bound  
\begin{equation}
\lim_{k \rightarrow \infty} \frac{\rho(k)}{\sqrt{k}} \rightarrow \infty \; 
\label{eq:optimal_redundancy}
\end{equation}
for the optimal redundancy $\rho(k)$ of the graph-based asynchronous codes, for any fixed $t \ge 3$.

We compare these results with their counterparts for (non-asynchronous) batch codes in~\cite{Vardy}, where it was shown that 
for any $t \ge 5$ the optimal redundancy of general (multiset primitive) linear batch codes behaves as $O(\sqrt{k} \log k)$,
while for $t \in \{3,4\}$ the corresponding redundancy is $O(\sqrt{k})$. 
It is worth mentioning that for $t \in \{3,4\}$ there is a gap between the optimal redundancy $O(\sqrt{k})$
of the codes studied in~\cite{Vardy} and the lower bound~(\ref{eq:optimal_redundancy}) for the graph-based batch codes presented in this work. 
It remains an open question what is the exact asymptotics for the graph-based asynchronous batch codes for various values of $t$, 
and whether the lower bound~(\ref{eq:optimal_redundancy}) 
actually matches for some values of $t$ the upper bound $O(\sqrt{k} \log k)$ obtained in~\cite{Vardy}, 
or there is a gap between the optimal redundancy of these two families of codes.


\section{Properties and Bounds of $t=2$ Asynchronous Batch Codes}
\label{sec:properties}

In this section, we focus on a special case $t=2$. 
We first present simple lower and upper bounds on the redundancy for (general) asynchronous batch codes. 
In the second part of this section, we derive lower and upper bounds on the optimal redundancy of graph-based asynchronous batch codes and show their tightness. 
We also define an additional class of asynchronous batch codes, and derive bounds on their optimal redundancy. 

We start with the following two simple lemmas, in which we give estimates on the size of $\cA(k,t=2)$. 

\begin{lemma}
We have $\cA(k,t=2) \ge k+1$ for any $k \ge 1$. 
\label{lemma:k_1}
\end{lemma}

\begin{proof}
Assume that a pair of requests $(x_i, x_j)$ is to be served, where $i, j \in [k]$, and assume that $x_i$ has been served first. Denote by $V$ the vector space spanned by the columns of $\bldG$, which are not used for recovery of $x_j$. Since any of the 
symbols $x_1, x_2, \cdots, x_k$ should be recoverable from the remaining columns of $\bldG$, 
$V$ should contain the unity vectors $\blde_1, \blde_2, \cdots, \blde_k$, and therefore
its dimension is at least $k$. Therefore, there should be at least $k$ such columns. 
\end{proof}

\begin{lemma}
We have $\cA(k,t=2) \le \cA(k,t=2, r=2) \le 2k-1$ for $k \ge 3$. 
\label{lemma:2k-1}
\end{lemma}

\begin{proof}
Let $\bldy = \bldx \bldG$, where the vectors $\bldx$ and $\bldy$ have length $k$ and $n$, respectively. 
We construct $\bldG$ by defining the entries $y_i$, $i \in [n]$, of $\bldy$, for $n = 2k-1$. Specifically,  
\begin{eqnarray*}
\left\{ \begin{array}{lcl}
y_i & = & x_1 + \sum_{\ell = 3}^{i+1} x_\ell \quad \mbox{ for $i = 1, 2, \cdots, k-1$} \\
y_k & = & \sum_{\ell = 1}^{k} x_\ell \\
y_{2k - i} & = & x_2 + \sum_{\ell = 3}^{i+1} x_\ell \quad \mbox{ for $i = 1, 2, \cdots, k-1$}  
\end{array} \right. \; . 
\end{eqnarray*}
where the empty sum is assumed to be equal zero. 
Then, we have the following reconstruction sets of sizes $1$ and $2$, two recovery sets for each $x_i$: 
\begin{eqnarray*}
\left\{ \begin{array}{lcl}
x_1 & = & y_1 \; = \; y_k - y_{k+1} \\
x_2 & = & y_k - y_{k-1} \; = \; y_{2k-1} \\
x_i & = & y_{i-1} - y_{i-2}  = \; y_{2k - i + 1} - y_{2k - i + 2} \quad \mbox{ for $i = 3, \cdots, k$ } 
\end{array} \right. \; 
\end{eqnarray*}
Specifically, we see that for each $i \in [k]$, one copy of $x_i$ is recoverable from a subset of size at most two of $\{y_1, y_2, \cdots, y_k\}$, and a second copy 
of $x_i$ is recoverable from a subset of size at most two of $\{y_k, y_{k+1}, \cdots, y_{2k-1}\}$. Moreover, $y_k$ is used in recovery of one copy of $x_1$ and one copy of $x_2$ only. By checking all the possibilities, it is straightforward to verify that for each choice of a recovery set $S_i$ for $x_i$, $i \in [k]$, there exists a recovery set $S_j$  of $x_j$, for any $j \in [k]$, such that $S_i \cap S_j = \varnothing$. 
\end{proof}

Next, we focus on the graph-based asynchronous batch codes. 
For the sake of completeness of the discussion, we remark that the lower bound $\rho \ge \sqrt{2k} + O(1)$ on the optimal redundancy of PIR codes (for $t\ge 3$) 
was obtained by Rao and Vardy in~\cite{Rao-Vardy}. This result implies analogous lower bound on the redundancy 
of (regular) batch codes for $t\ge 3$. It is also shown that the bound is tight for PIR codes. 
Moreover, for $t=2$, the optimal redundancy for batch codes is just one bit~\cite{Vardy}.
By contrast, we show a lower bound $\rho \ge 2\sqrt{k}$ for graph-based asynchronous batch codes (for all $t \ge 2$), and present an explicit construction of asynchronous batch codes for $t = 2$ that attain this bound. 

\begin{theorem}
\label{thrm:mantel}
Let $\code$ be a graph-based asynchronous $[n,k,t \ge 2]$ batch code.
Then, its redundancy is $\rho \ge 2\sqrt{k}$. 
\end{theorem}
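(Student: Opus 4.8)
The plan is to exploit the structure forced by Theorem~\ref{thrm-asynchronous}: the code must admit an induced subgraph $H(A,B',E')$ with minimum left-degree at least $t-1 \ge 2$ and girth at least $8$, and to show that such a bipartite graph cannot be too ``dense'' — i.e.\ that $|B'|$ (and hence $\rho = n-k \ge |B'|$) must be at least $2\sqrt{k}$. First I would set $m \triangleq |B'|$ and count, for each vertex $a \in A$, the set of pairs $\{v_1,v_2\} \in (B')^{(2)}$ of parity-nodes that are both adjacent to $a$ in $H$. Since $\d_H(a) \ge t-1 \ge 2$, each $a$ contributes at least one such pair, so there are at least $|A| = k$ pairs counted with multiplicity over $A$.

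The key step is the girth condition. A girth of at least $8$ forbids $4$-cycles in $H$, which means no two distinct vertices $a,a' \in A$ can share two common neighbours in $B'$; equivalently, each unordered pair $\{v_1,v_2\} \subseteq B'$ is ``used'' by at most one vertex of $A$. Hence the $k$ pairs produced above are in fact distinct elements of $(B')^{(2)}$, giving
\[
k \;\le\; \binom{|B'|}{2} \;=\; \frac{|B'|(|B'|-1)}{2}\,.
\]
Then I would solve this inequality for $|B'|$: it yields $|B'| \ge \tfrac{1 + \sqrt{1+8k}}{2} > \sqrt{2k}$. Since $\rho = n - k = |B| \ge |B'|$, this already gives $\rho > \sqrt{2k}$, but to reach the claimed $2\sqrt{k}$ one needs a sharper count that uses \emph{both} the left-degree lower bound of $t-1$ (note $t \ge 3$ so $\d_H(a) \ge 2$, but when $t=3$ this is exactly $2$) and possibly a double-count of incidences. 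The refinement I expect to need: count incidences $|E'| = \sum_{a} \d_H(a) \ge 2k$ on one hand, and on the other hand bound $\sum_{v \in B'} \binom{\d_H(v)}{2} \le \binom{k}{2}$ from the no-$C_4$ condition; convexity of $\binom{\cdot}{2}$ then forces $m \cdot \binom{2k/m}{2} \le \binom{k}{2}$, and simplifying gives $m \ge 2\sqrt{k} - O(1)$, or with a slightly more careful treatment of the lower-order terms, $\rho \ge 2\sqrt{k}$ exactly (perhaps after observing that $\rho$ and $k$ are integers, or that the degrees in $A$ can be taken to be exactly $t-1=2$ in the extremal case so the inequality is clean).

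The main obstacle is getting the constant right: a naïve Kővári–Sós–Turán-style count gives $\rho = \Omega(\sqrt{k})$ easily, but pinning down the exact constant $2$ requires carefully choosing which bipartite double-count to run and tracking the lower-order terms so they can be absorbed (using integrality, or the freedom to restrict to a subgraph where left-degrees are exactly $t-1$, or passing to the limit in the definition of ``optimal redundancy''). I would therefore structure the argument so that the edge count $|E'| \ge 2k$ and the pair-injectivity $\sum_{v \in B'}\binom{\d_H(v)}{2} \le \binom{k}{2}$ are combined via Jensen/convexity, and then verify that the resulting quadratic inequality in $m=|B'|$ forces $m \ge 2\sqrt{k}$ once one notes $\binom{k}{2} < k^2/2$ and $\binom{2k/m}{2} \ge \frac{(2k/m)(2k/m - 1)}{2}$; a short case check (or the remark that for the bound to be meaningful we may assume $k$ large) closes the gap.
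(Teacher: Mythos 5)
Your opening count is the correct Kővári--Sós--Turán-type bound, and it is valid, but it only uses the absence of $4$-cycles in $H$ (i.e.\ girth $\ge 6$). That condition alone cannot give the constant $2$: it yields $k \le \binom{|B'|}{2}$ and hence $|B'| \gtrsim \sqrt{2k}$, which is exactly the PIR-code bound of Rao and Vardy that the paper explicitly cites as the \emph{weaker} comparison. Your proposed ``refinement'' does not close the gap, because the arithmetic does not come out the way you expect. From $m\binom{2k/m}{2}\le\binom{k}{2}$ you get $2k\bigl(\tfrac{2k}{m}-1\bigr)\le k(k-1)$, i.e.\ $m\ge 4k/(k+1)$, a trivial $O(1)$ bound. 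This is the KST count run from the large side of the bipartition, and it yields essentially nothing useful. The jump from $\sqrt{2k}$ to $2\sqrt{k}$ is not a matter of lower-order bookkeeping or of passing to exact left-degree $t-1$: it requires the additional absence of $6$-cycles, which your argument never uses.

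For comparison, here is how the paper exploits the full girth-$8$ hypothesis. First prune $H$ so that every vertex of $A$ has degree exactly $2$; deleting edges cannot create cycles, so the girth stays $\ge 8$, and $|A|=k$ is preserved. Then contract each $a\in A$ with neighbourhood $\{v_1,v_2\}\subseteq B'$ into the edge $\{v_1,v_2\}$ of an ordinary graph $G'$ on vertex set $B'$. Cycles of length $2\ell$ in the bipartite graph correspond to cycles of length $\ell$ in $G'$, so girth $\ge 8$ forces $G'$ to be a \emph{simple} graph (no $C_4$ in $H$ $\Rightarrow$ no repeated edge in $G'$) \emph{and} triangle-free (no $C_6$ in $H$ $\Rightarrow$ no $C_3$ in $G'$). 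Mantel's theorem then gives $k=|E(G')|\le |V(G')|^2/4\le(n-k)^2/4$, i.e.\ $\rho=n-k\ge 2\sqrt{k}$. If you want to run this as a purely bipartite double-count you would in effect have to reprove Mantel, and in doing so the no-$C_6$ constraint is unavoidable; dropping it caps you at $\sqrt{2k}$.
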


\begin{proof}
Let $\hat{G} = (A, B, \hat{E})$ be a bipartite graph that corresponds to the code $\code$. 
Then, the girth of $\hat{G}$ is $\ge 8$, and $\d(a) \ge 2$ for $a \in A$. 
Also, $k=|A|$ and $n-k=|B|$.

First, we delete edges of $\hat{G}$ such that after deletion $\d(a)=2$ for $a \in A$, and denote the new graph $G$ (note that we change the code). We construct a new (non-bipartite) graph, $G' = (V', E')$, from $G$, by following the correspondence in 
Definition~\ref{def:correspondence}. Since the left degree of $G$ is 2, the result is indeed a graph (rather than hypergraph).
Specifically, take $V' = B$. For each $u \in A$, 
replace $u$ and two edges $\{u, v_1\}$ and $\{u, v_2\}$ incident with it by a new edge $e_u = \{ v_1, v_2 \}$. 
The construction implies that there is a cycle of length $2t$ in $G$ if and only if there is a cycle of length $t$ in $G'$. 
Thus, $G$ has girth $\ge 8$ if and only if $G'$ has girth $\ge 4$. 

By Mantel's Theorem~\cite{mantel} (see also: Tur\'an's Theorem~\cite{turan}), this implies that the number of edges $|E'|$  satisfies $ |E'| \le |V'|^2/4 \; .$
Since $|A| = k$ and $|B| = n-k$, we obtain that $|V'| = n-k$ and $|E'| = k$. Therefore, the redundancy $\rho = n-k \ge 2\sqrt{k}$.
The redundancy of the original code is at least as large.
\end{proof}

This bound is in fact tight. 
\begin{example}
\label{example:tight-1}
Consider a complete bipartite graph $G'$ with a vertex set $V' = A' \cup B'$, $A' \cap B' = \varnothing$, $|A'| = |B'|$. This graph has $|V'|^2/4$ edges in total, and girth 4. Moreover, this graph has the largest possible  
number of edges for any girth-4 graph with $|V'|$ vertices, as seen by Mantel's Theorem~\cite{mantel}.

Next, we convert this graph into a bipartite graph $G$ 
by using the inverse of the above mapping. Namely, each edge is replaced 
by a triple ``edge, vertex, edge''.  We obtain that $G$ is a left regular bipartite graph of left degree 2 with $|A| = |V'|^2/4$ and $|B| = |V'|$. The graph $G$ has girth 8 and hence it yields an asynchronous batch code having length $n = |V'|^2/4+ |V'|$, number of information symbols $k=|V'|^2/4$, redundancy $\rho = 2\sqrt{k}=|V'|$, and $t=2$.
\end{example}


\bigskip

Next, we turn to defining another class of asynchronous batch codes. In the sequel, we derive bounds on their optimal redundancy. 

\begin{definition}
Assume that an asynchronous $[n,k,t]$-batch code $\code$ satisfies the following conditions: 
\begin{enumerate}
\item[C1.] The code is systematic.
\item[C2.] Every recovery set consists of either
  \begin{description}
	\item[Type (a):] $\quad$ one symbol from the systematic part, or
      \item[Type (b):] $\quad$ one symbol from the check part and some symbols from the systematic part.
  \end{description}
\end{enumerate}
Then, we call $\code$ a \emph{code satisfying conditions C1 and C2.}
\end{definition}

We remark that the codes satisfying conditions C1 and C2 are a variation of graph-based batch codes.
However, graph-based codes use only recovery sets of Type (b), while the codes satisfying conditions C1 and C2 can also use singleton recovery sets (the sets of Type (a)). 
 
As we show in the sequel, it is possible to show a tighter upper bound on the code redundancy when compared with its counterpart in 
Theorem~\ref{thrm:mantel}. Thus, in this case we show that the optimal redundancy is $\sqrt{2k} + O(1)$.  

Under conditions C1 and C2, we are able to find the shortest possible length of an asynchronous batch code with $t=2$. For $k \in \nn$, let $m_k$ be the smallest positive integer such that ${m_k \choose 2} \geq k$. This is equivalent to $m_k(m_k - 1) \geq 2k$, thus implying that $m_k \geq \sqrt{2k} + O(1)$. 

\begin{lemma}
\label{lemma:m_k}
$k-1+m_{k-1} \geq (k+m_k) - 2$
\end{lemma}

\begin{proof}
We need to prove that $m_{k-1} \geq m_k - 1$. The conclusion follows immediately by using an expression 
for the number of combinations ${m_k \choose 2}$.
\end{proof}

\begin{proposition}
\label{prop:shortest}
Let $\code$ be an asynchronous $[n,k,t=2]$ code satisfying conditions C1 and C2. Then the shortest possible length of $\code$ is $n = k+m_k$, i.e.~the optimal redundancy is $\rho = m_k$.
\end{proposition}

\begin{proof}
In light of Lemma~\ref{lemma:m_k}, we may assume that in the redundancy part of the generator matrix $\bldG$ there is no column of weight one. 

Indeed, assume to the contrary that there is a column $(1,0,\ldots,0)^T$ in the redundancy part of $\bldG$. Together with the systematic part, there are now two such columns. To recover the symbol $x_1$, we are now able to choose either or both of those columns: for this we need to show that we may assume none of these columns will have to be used to recover any other symbol. To achieve this, change the code by replacing any other ones in the first row by a zero, such that the weight of the first row is now two. At the same time note that any recovery set for any symbol other than $x_1$ can be modified by removing any of the two copies of the symbols corresponding to the column $(1,0,\ldots,0)^T$ from the recovery set. In effect this means that we split the recovery task into two disjoint recovery tasks: use any copies of $(1,0,\ldots,0)^T$ to recover symbol $x_1$ and use columns among the other columns to recover any other symbol. Thus, we can now use the relation $\cD(k,t=2) \leq \cD(k-1,t=2) + 2$,
where $\cD(k,t)$ denotes the minimal length of the code satisfying conditions C1 and C2 of dimension $k$ that supports $t$ queries. 
Therefore, we have reduced the problem to using a code for information vectors of length $k-1$.

It follows from Lemma~\ref{lemma:m_k}, by strong induction, that we may assume that in the redundancy part of $\bldG$ there is no column of weight-1. Now we formulate the main result of this section.

The algorithm for choosing the recovery set for the incoming request $x_i$ acts as follows.
\begin{itemize}
	\item If the column $i$ from the systematic part is available then choose this column as a singleton recovery set;
	\item otherwise choose an available recovery set of Type (b).
\end{itemize}

Under these assumptions we prove the following claims.

\begin{claim}
We may assume that the weight of each row in the generator matrix is at least three.
\label{claim:weight-three}
\end{claim}

Assume to the contrary that w.l.o.g.~the weight of the first row is two. It has to be at least two since each recovery set of $x_1$ uses at least one of the respective columns. Now, observe that the respective column in the redundancy part can only be used to recover the symbol $x_1$. Indeed, assume it recovers a symbol $x_i$, $i \neq 1$. This means that both columns whose first entry is $1$ are in use for this recovery task. When this recovery task is ongoing, no additional available recovery set can be found for the symbol $x_1$. This is a contradiction to the assumption that the code can satisfy any $t$ asynchronous requests, for $t = 2$. 

Next, note that we can modify the code by replacing the entries in the rows other than the first row in these two columns by zeros, and by changing the recovery sets for the symbol $x_1$ to be the singletons.
To this end, we can use Lemma~\ref{lemma:m_k} and the strong induction to conclude that we can assume that the weight of every row is at least three.



\begin{claim}
Suppose that the column $\bldb$ in the redundancy part of $\bldG$ is used to recover any of the distinct symbols $x_{i_1},x_{i_2},\ldots,x_{i_h}$ and no other symbols. Then we may assume that this column has weight $h$ and has ones precisely in positions $i_1,i_2,\ldots,i_h$.
\label{claim:use-symbol}
\end{claim}

To see this, note that the $i$-th column of $\bldG$, $\bldb$, is involved only in recovery sets of Type (b), where all the remaining columns of each recovery set come from the systematic part. It is clear that if the entry $g_{i_j,i}$ in row $i_j$ and column $i$ of $\bldG$ is equal to zero, then the symbol $x_{i_j}$ cannot be recovered using column $\bldb$, therefore $\bldb$ has ones in all the positions $i_1,i_2,\ldots,i_h$. The columns in the systematic part, which are used in each recovery set, that involves the column $\bldb$, are as follows: to recover $x_{i_j}$, the columns $i_1,\ldots,i_{j-1},i_{j+1},\ldots,i_h$ have to be used. On the other hand, if $\bldb$ had any additional ones, the respective recovery sets would be supersets of these. Therefore we may assume that all other entries in column $\bldb$ are zeros, possibly reducing recovery sets (throwing out columns) in the process.

In what follows we make the assumption of the preceding claim about the columns in the redundancy part.

In the following observation, we consider which symbols these columns are used to recover.

\begin{observation}
\label{observ:two-weight-2}
Two equal weight-2 columns can be replaced by weight-1 columns by replacing a single one by a zero in each of the corresponding rows. Thus, we may assume that this
case does not occur by the preceding Lemma and strong induction.
\end{observation}

\begin{observation}
\label{observ:three-rows}
Two columns which both have ones in the same three (or more) rows cannot be in use at the same time.
\end{observation}

\begin{observation}
Two columns for which there are exactly 2 rows $i$ and $j$ in which both have an one, can only be in use at the same time to recover the symbols $x_i$ and $x_j$.
\end{observation}

\begin{observation}
Suppose two of the same requests $x_i$ and $x_i$ come in after each other. If the first request is recovered by a set of Type (a), then the second one is recovered by Type (b). If the first is recovered by a set of Type (b), then the second is recovered by Type (a). Thus, in principle, we are allowed to fix such a recovery set of Type (b) for the first situation in the description of the algorithm, however, it is not necessary in our considerations.
\end{observation}

\begin{lemma}
Consider the set of columns in the redundancy part of $\bldG$. For a column of weight $h$, there are at least $h$ other columns in the redundancy part whose support intersects the support of the given column.
\label{lemma:h}
\end{lemma}

\begin{proof}
Suppose that the column is able to recover the distinct information symbols $x_{i_1},x_{i_2},\ldots,x_{i_h}$. Therefore its nonzero entries are exactly in positions $i_1,i_2,\ldots,i_h$. Suppose that, w.l.o.g., the column is in use to recover a symbol other than $x_{i_1}$, and that a new request for $x_{i_1}$ comes in
(more generally, we may have any symbol $x_{i_1},x_{i_2},\ldots,x_{i_h}$ in the role of $x_{i_1}$). 
Due to Observation~\ref{observ:three-rows}, there are two possibilities:
\begin{enumerate}[1)]
	\item There is a column in the redundany part whose support intersect the support of the given column only in position~$i_1$. 
	\item There is no column as in 1), but there are $h-1$ other columns whose supports intersect the given column in exactly two positions, say, $i_1$ and $i_2$, $i_1$ and $i_3$, $\ldots$, $i_1$ and $i_h$, respectively.
\end{enumerate}
Let us show that in each case the claim of the lemma holds.

\begin{itemize}
\item{If $h=2$}, then the only way the claim of the lemma may not hold is if there is another column whose support contains $\{i_1,i_2\}$, and there is no column in the redundancy part whose support intersect the support of the given column in a single position. If the support for that column equals $\{i_1,i_2\}$, we reach a contradiction by Observation~\ref{observ:two-weight-2}. Alternatively, its support includes another position, assume it is $i_3$. Now suppose that that column can be used for recovering symbol $x_{i_3}$ (otherwise, we have a contradiction to Claim~\ref{claim:use-symbol}).
In this case, the request for neither $x_{i_1}$ nor $x_{i_2}$ can be recovered: indeed, each of the remaining columns has the same symbol in positions $i_1$ and $i_2$ which means that only $x_{i_1} + x_{i_2}$ can be found but none of the individual bits. This is a contradiction. Therefore for $h=2$ at least two other columns have to exist whose support intersects the support of the given column.

\item{If $h\geq 3$}, we consider the following cases. If all positions $i_1,i_2,\ldots,i_h$ obey possibility 1), then the claim is true. Otherwise, w.l.o.g., let~$i_1$ be a position obeying possibility 2). This implies that there are at least $h-1$ columns whose support intersects the support of the given column. 

Let us show that there exist an additional column in the redundancy part of $\bldG$ whose support intersects the support of the given column. Indeed, if any of the positions $i_2,\ldots,i_h$ obeys possibility 1), we are done. Therefore, we may assume that they all obey possibility 2). This means, however, that the total number of columns whose support intersects the support of the given column is at least ${h\choose 2} \geq h$ for $h \ge 3$.
\end{itemize}
\end{proof}

We turn now to completion of the proof of Proposition~\ref{prop:shortest}. We do that by bounding from below the total number of unordered pairs of  different columns in the redundancy part of $\bldG$. Obviously, this number is larger or equal to the total number of pairs of columns whose supports intersect. By applying Lemma~\ref{lemma:h}, this number is larger or equal to the total weight of all columns in the redundancy part, divided by two. From Claim~\ref{claim:weight-three}, we have that this number is larger or equal to $2k/2 = k$. This completes the proof of the proposition.
\end{proof}

In the following example, we note that the bound in Proposition~\ref{prop:shortest} is tight for infinitely many values of $k$. 
\begin{example}
\label{example:tight-2}
Consider the asynchronous $[n,k,t=2]$-batch code $\code$ satisfying C1 and C2 with $n = k + m_k$, $k \ge 2$, whose generator matrix is defined as $\bldG = [ \; \bldI \; | \; \bldA \; ]$, 
and $\bldA$ is a $k \times m_k$ binary matrix that consists of all possible different rows of weight two, and $\bldy = \bldx \bldG$. 
It is straightforward 
to see that the following four cases hold: 
\begin{itemize}
\item[(a)]
$x_{i}$ is being served using a systematic part, and a new requests $x_{i}$ comes in; 
\item[(b)]
$x_i$ is being served using a redundancy part, and a new request $x_{i}$ comes in;
\item[(c)]
$x_i$ is being served using a systematic part, and a new request $x_{j}$, $i \neq j$, comes in;
\item[(d)]
$x_i$ is being served using a redundancy part, and a new request $x_{j}$, $i \neq j$, comes in. 
\end{itemize}

We note that the supports of any two columns intersect only in one position. Then, it is straightforward to check that in the cases (b) and (c), the new request can  always be served using a singleton recovery set in the systematic part. The request in the case (a) can be served using a recovery set in the redundancy part. 

In the case (d), if the information symbol $y_j = x_j$ is available, then it can be used for the recovery of the request. Otherwise, if 
the singleton $y_j$ is currently being used for recovering $x_i$, then the symbol corresponding to the other parity column, say $\bldg$, with one at position $j$, is free and can be used along with some singleton columns. Since supports of these two columns in the redundancy part intersect only in a single one in position $j$, the recovery set for $x_j$ is disjoint with the existing recovery set of $x_i$. 

We remark that it was shown in~\cite{Rao-Vardy} that the code $\code$ is also a non-asynchronous $[n = k + m_k, k, 3]$ batch code.  
\end{example}

Table~\ref{table:t_2} summarizes the lower and upper bounds on the redundancy for different models of codes. 

\begin{table*}[!h]
	\caption{Lower and upper bounds on the redundancy $\rho(k)$ for different models of asynchronous batch codes, $t=2$.}
	\label{table:t_2}
	\renewcommand{\arraystretch}{1.2}
	\begin{tabular}{|l|l|l|}
\hline
Code type & Lower bound & Upper bound \\
\hline
General asynchronous batch codes & $1$, Lemma~\ref{lemma:k_1} & $k-1$, Lemma~\ref{lemma:2k-1} \\
& & $\sqrt{2k} + O(1)$, Example~\ref{example:tight-2} \\
\hline
Asynchronous batch codes satisfying C1 and C2 & $\sqrt{2k} + O(1)$, Proposition~\ref{prop:shortest} & $\sqrt{2k} + O(1)$, Example~\ref{example:tight-2} \\
\hline
Graph-based asynchronous batch codes & $2\sqrt{k}$, Theorem~\ref{thrm:mantel} & $2\sqrt{k}$, Example~\ref{example:tight-1} \\
\hline
	\end{tabular}
	\renewcommand{\arraystretch}{1}
\end{table*}
We remark that the upper bound of $\sqrt{2k} + O(1)$ in Example~\ref{example:tight-2} provides a tighter upper bound on the optimal redundancy of general asynchronous batch codes than the counterpart in Lemma~\ref{lemma:2k-1} .

\section{Some Observations on $t>2$ Asynchronous Batch Codes}
\label{sec:general}
In this section, we present several additional results. 
We start with the following lemma. 

\begin{lemma}
Suppose that the generator matrix $\bldG$ of an $[n,k,t]$ batch code contains an $p \times s$ submatrix $\bldA$ consisting either of all zeros or of all ones, for some $s\in \nn$ and for $p \ge 2$. Then, $n \ge s + t$. 
\end{lemma}

\begin{proof}
Let $i$ and $j$, $i \neq j$, be indices of the rows of $\bldG$ which overlap with $\bldA$. Denote by $\cS_A$ and $\cT_A$ the subsets of rows and columns of $\bldG$, respectively, which overlap with $\bldA$. Consider two information vectors $\bldx, \hat{\bldx} \in \ff^k$, such that $x_i = x_j = 0$, $\hat{x}_i = \hat{x}_j = 1$, and $x_\ell = \hat{x}_\ell$ for $\ell \notin \cS_A \setminus \{i,j\}$. 
Denote $\bldy = \bldx \bldG$ and $\hat{\bldy} = \bldG \hat{\bldx}$.

Assume that $t$ users query $t$ copies of $x_i$. Then, for each copy of $x_i$, at least one symbol in the recovery set should differ in 
$\bldy$ and in $\hat{\bldy}$. Therefore,   
$\bldy$ and in $\hat{\bldy}$ must have at least $t$ different symbols outside the positions corresponding to $\cT_A$. We conclude that $n \ge s + t$. 
\end{proof}

\medskip

\begin{proposition}
Let $\bldG= [\bldI_k|\bldA_{k,{k \choose 2}}]$, where $\bldA_{k, {k \choose 2}}$ is a binary $k \times {k \choose 2}$ matrix with all possible columns of weight two, $k \ge 2$. Then $\bldG$ generates an asynchronous batch code which supports any $t=k-1$ queries. 
\end{proposition}

\begin{proof}
Let $\bldy =  (y_1, y_2, \cdots, y_n) = \bldx \bldG$. 
It is sufficient to restrict ourselves to the recovery sets that consist of two symbols, one symbol corresponds to a column of weight one in the systematic part of $\bldG$, and one symbol corresponds to a column of weight two in the redundancy part of $\bldG$. 
   
Assume that the $k-2$ requested symbols $(x_{i_1}, x_{i_2}, \ldots, x_{i_{k-2}})$ are being served, and an additional symbol $x_j$ is requested by the user. 
There are $k-1$ columns $\bldh$ of weight two in $\bldG$, which have $h_j = 1$. Since $k-2$ symbols are being served, then only $k-2$ symbols in the systematic part 
are used. There are two cases. 
\begin{description}
\item[Case 1]: if the symbol $y_j$ in the systematic part is not used, then there are at most $k-2$ other symbols in the systematic part, which are used. Let
$y_i$ be an unused symbol in the systematic part, $i \neq j$. Then, the symbol $y_\ell$ corresponding to the column $\bldh$ of weight two with $h_i = h_j = 1$ is also not used. We then use $x_j = y_i + y_\ell$. 
\item[Case 2]: if the symbol $y_j$ in the systematic part is used, then it is used together with a symbol $y_{j'}$ corresponding to a column $\bldh$ of weight two in the redundancy part, $h_j = 1$. In that case, additional $k-3$ symbols in the systematic part other than the symbol $y_j$ are used. Let $y_\ell$ be an index of an unused symbol in the systematic part. Consider the column $\bldg$ that has two nonzero entries $g_j$ and $g_\ell$. The symbol $y_{\ell'}$ corresponding to that column is not used since the symbol $y_\ell$ is not used, and the symbols $y_j$ and $y_{j'}$ are combined together. Then, we can use for the recovery $x_j = y_\ell + y_{\ell'}$. 
\end{description}
\end{proof}

The next result shows a way for constructing an asynchronous $[n',k,t=3]$ batch code, $n'=n+k$, from a given $[n,k,t=3]$ batch code by appending an identity matrix on the right of $\bldG$. 
\begin{proposition}
Let $\bldG= [ \; \bldI_k \; | \; \bldA \; ]$ be a $k \times n$ generator matrix of a systematic (non-asynchronous) $[n,k,t=3]$-batch code $\code$ that satisfies Conditions C1 and C2. 
Then $\bldG'= [ \; \bldI_k \; | \; \bldA \; | \; \bldI_k \; ]$ generates an asynchronous $[n',k,t=3]$-batch code, $n'=n+k$.
\end{proposition}
\begin{proof}
Let $\code$ be a batch code as in the condition of the proposition, and let $\bldG= [ \; \bldI_k \; | \; \bldA \; ]$ be its generator matrix. 
Let $\code'$ be a code generated by the matrix 
$\bldG'= [ \; \bldI_k \; | \; \bldA \; | \; \bldI_k \; ]$. 

Denote $\bldy = \bldx \bldG$. Assume that the value of symbol $x_i$ is requested. There are three non-overlapping recovery sets for $x_i$  in $\code$, denote them $R_{i,1}$, $R_{i,2}$ and $R_{i,3}$, where $R_{i,1}$ is a singleton in the systematic part. Additionally, there is a 
singleton recovery set in the block of the $k$ right-most positions of $\bldy$ (we call such recovery sets to be in the \emph{right part}). We use only one of these four subsets for the recovery of $x_i$. We assume the following algorithm for the recovery of the requested information symbols. 
\begin{enumerate}
\item
If the symbol $x_i$ is available in the systematic part of $\bldy$, use it as a singleton recovery set. 
\item
Otherwise, if the symbol $x_i$ is not available in the systematic part, use any of the recovery sets $R_{i,j}$, $j = 2,3$. 
\item
Otherwise, if the sets $R_{i,j}$, $j \in [3]$, are not available, use the singleton recovery set in the right part of $\bldy$.
\end{enumerate} 

We show that this algorithm always succeeds to satisfy any three requests in an asynchronous manner. We consider the following cases. 
\begin{itemize}
\item
If the three requested symbols are three copies of the same symbol, say $(x_i, x_i, x_i)$, then the four corresponding recovery sets are all disjoint, and therefore 
there are available recovery sets. 
\item
If the three requested symbols are $(x_i, x_j, x_\ell)$, where $i$, $j$ and $\ell$ are all different, then
assume that the requests arrive in the order $x_i$, $x_j$, $x_\ell$. Independently of the choice of the recovery set 
for $x_i$, there is always an available disjoint recovery set for $x_j$ in the right part of $\bldy$. 
Independently of the choice of the recovery sets for $x_i$ and $x_j$, there is always an available disjoint recovery set for $x_\ell$ in 
the right part of $\bldy$. 
\item
If the three requested symbols are $(x_i, x_i, x_j)$, $i \neq j$, then consider different orders of arrivals of these requests. 
\begin{itemize}
\item
If the first two requests are $x_i$ and $x_i$, then the third request $x_j$ can be recovered using the right part of $\bldy$. 
\item
If the first two requests are $x_i$ and $x_j$ in any order, and the third request is $x_i$, then assume to the contrary that 
the last $x_i$ cannot be recovered. In particular, this means that the symbol $x_i$ in the right part of $\bldy$ is used 
as a singleton recovery set, and that $x_j$ is recovered using the recovery set $S$ of Type (b), which contains $y_i$ in the 
systematic part of $\bldy$. 

The set $S$ should intersect both $R_{i,2}$ and $R_{i,3}$, otherwise we could use one of them for recovery of $x_i$.
Since $R_{i,2}$ and $R_{i,3}$ are disjoint, this means that $S$ is different from each of them. Therefore, $S$ is never used 
for recovery of $x_i$. However, since $S$ contains the systematic symbol $y_i$, but is used for recovery of $x_j$, then it contains some symbol $y_\ell$,
which corresponds to a column $\bldg$ of $\bldG$, which has $g_i = 1$. 
However, by Claim~\ref{claim:use-symbol}, it should hold $g_i = 0$, since column $\bldg$ is not used for the recover of $x_i$. We obtain a contradiction, 
thus completing the analysis of all possible cases.
\end{itemize}
\end{itemize}
\end{proof}

\begin{proposition}
\label{pro:asynch}
An $[n,k,t,r]$ batch code (with the restricted size $r$ of the recovery sets), $2 \le r < t$, is an $\left[n,k,\left\lfloor\frac{t}{r}\right\rfloor\right]$ asynchronous batch code.
\end{proposition}
\begin{proof}
Let $\code$ be an $[n,k,t,r]$ batch code, and $2\le r< t$. Then, for each information symbol $x_i$, $i=1,2, \cdots, k$, there exist 
at least $t$ disjoint recovery sets.

Consider a batch of requests of size $\left\lfloor t/r \right\rfloor$, denote it $\left(x_{i_1}, x_{i_2}, \cdots, x_{i_{\lfloor t/r \rfloor} }\right)$. 
Since there exist $t$ disjoint recovery sets for any choice of $t$ queries, we have a disjoint collection of recovery sets for this batch, say $R_{i_1}, R_{i_2}, \cdots, R_{i_{\left\lfloor t/r \right\rfloor}}$, where $|R_{i_j}| \le r$ for any $j \in [{\lfloor t/r \rfloor}]$. 

Next, assume that one of the symbols $x_{i_h}$, $h \in \left[ \left\lfloor t/r \right\rfloor \right]$, has been recovered, and that another query $x_\ell$, $\ell \in [k]$, comes in. We show that there exists a recovery set for $x_\ell$ which is pairwise disjoint with all the recovery sets $R_{i_j}$, $j \in \left[ \lfloor t/r \rfloor \right]$.

Let $R_{\ell_1}, R_{\ell_2}, \cdots, R_{\ell_t},$ be the $t$ disjoint recovery sets for $x_\ell$. Since $|R_{i_j}|\le r $, each $R_{i_j}$, $j \in \left[ \left\lfloor t/r \right\rfloor \right]$, has a nonempty intersection with at most $r$ recovery sets $R_{\ell_i}$, $i \in [t]$. Then, the maximum number of recovery sets for $x_\ell$, which overlap with the currently used recovery sets $R_{i_j}$, is $r \cdot (\left\lfloor t/r \right\rfloor-1)$. Therefore, there exists at least 
$$t-r \cdot \left(\left\lfloor\frac{t}{r}\right\rfloor-1\right) \ge t-r \cdot \left(\frac{t}{r}-1\right)=r>1$$
recovery sets for $x_\ell$, which do not intersect the currently used sets $R_{i_j}$. Hence, $\code$ is an asynchronous $\left[n,k,r, \left\lfloor t/r \right\rfloor \right]$-batch code.
\end{proof}

\section{Conclusions}

In this work, we considered a model of distributed data storage system employing batch codes for improved load balancing. 
We introduced a concept of asynchronous batch codes, which are suitable for serving the user requests immediately upon their arrival. 
We showed that hypergraphs of Berge girth at least 4 yield asynchronous batch codes, which we called \emph{graph-based}. 
We derived lower and upper bounds on the optimal redundancy $\rho(k)$ of various types of asynchronous batch codes of dimension $k$
with the query size $t=2$. For a general fixed value of $t \ge 3$, we showed that the optimal redundancy of graph-based 
asynchronous batch codes is $\rho(k) = O\left({k}^{1/(2-\epsilon)}\right)$ for any small $\epsilon>0$, and that 
$\lim_{k \rightarrow \infty} \rho(k)/\sqrt{k} = \infty$.  
 
This work poses a number of open questions. Below, we list some of them:
\begin{enumerate}
\item
What is the optimal value of redundancy $\rho(k)$ for asynchronous (graph-based or non-graph-based) batch codes of dimension $k$, for specific values of $t \ge 3$?
\item
What is the effect of the maximum size of the recovery set $r$ on the redundancy? 
\item
What are the shortest length asynchronous batch codes for specific small values of $k$? 
\item
Given a non-asynchronous batch code of a certain length, what is the minimum increase in redundancy required to obtain an asynchronous batch code with the same parameters?  
\item
Find optimal and sub-optimal constructions of asynchronous batch codes. 
\item
Propose batch codes and algorithms that allow for efficient recovery of the requested symbols in the practical settings.
\end{enumerate}








\end{document}